\documentclass[12pt,a4paper]{article}

\frenchspacing                          
\sloppy                                 

\usepackage{ifpdf}
\usepackage{color}
\usepackage{latexsym}                   
\usepackage{epsfig}                     
\usepackage{graphicx}
\usepackage{amssymb, amsmath, amsthm}   
\usepackage{enumerate}                  
\usepackage{mathrsfs}                   
\usepackage{geometry}                                                                          
\usepackage{verbatim}                                                                           

\usepackage[round]{natbib}
\usepackage{setspace}
\usepackage[nolists]{endfloat}

\geometry{top=35mm, bottom=30mm, left=30mm, right=30mm}

\newtheorem{thm}{Theorem}[]
\newtheorem{lem}[thm]{Lemma}

\newtheorem{prop}[thm]{Proposition}

\theoremstyle{definition}

\theoremstyle{definition}

\theoremstyle{definition}

\newcommand	{\RR}{\mathbb{R}}
\newcommand	{\PP}{\mathbb{P}}
\newcommand	{\EE}{\mathbb{E}}

\newcommand {\timestop}{T}

\newenvironment{myindentpar}[1]%
     {\begin{list}{}%
             {\setlength{\leftmargin}{#1}}%
             \item[]%
     }
     {\end{list}}




\newcommand{\tanjaR}{\textcolor{black}}

\begin{document}

\title{
Macro-evolutionary models and coalescent point processes:\\
\medskip
The shape and probability of reconstructed phylogenies
}
\author{\textsc{By Amaury Lambert$^{1,2}$ \& Tanja Stadler$^{3,*}$}
}
\date{}
\maketitle
\noindent\textsc{$^1$
UPMC Univ Paris 06\\
Laboratoire de Probabilités et Modèles Aléatoires CNRS UMR 7599}\\
\noindent\textsc{$^2$
Collège de France\\
Center for Interdisciplinary Research in Biology CNRS UMR 7241\\
Paris, France}\\
\textsc{E-mail: }amaury.lambert@upmc.fr\\
\textsc{URL: }http://www.proba.jussieu.fr/pageperso/amaury/index.htm\\
\\
\noindent\textsc{$^3$ Institute of Integrative Biology\\
ETH Z\"{u}rich\\
Universit\"{a}tsstrasse 16\\
8092 Z\"{u}rich\\
Switzerland}\\
\textsc{E-mail: }tanja.stadler@env.ethz.ch\\
\textsc{URL: }http://www.tb.ethz.ch/people/tstadler\\
\\
\noindent{$^*$ corresponding author}\\

\newpage

\doublespacing

\section*{\textsc{Abstract}}

Forward-time models of diversification (i.e., speciation and extinction) produce phylogenetic trees that grow ``vertically'' as time goes by. Pruning the extinct lineages out of such trees leads to natural models for reconstructed trees (i.e., phylogenies of extant species). Alternatively, reconstructed trees can be modelled by coalescent point processes (CPP), where trees grow ``horizontally'' by the sequential addition of vertical edges. Each new edge starts at some random speciation time and ends at the present time; speciation times are drawn from the same distribution independently. CPP lead to extremely fast computation of tree likelihoods and simulation of reconstructed trees. Their topology always follows the uniform distribution on ranked tree shapes (URT).

We characterize which forward-time models lead to URT reconstructed trees and among these, which lead to CPP reconstructed trees. We show that for any ``asymmetric'' diversification model  in which speciation rates only depend on time and extinction rates only depend on time and on a non-heritable trait (e.g., age), the reconstructed tree is CPP, even if extant species are incompletely sampled. If rates additionally depend on the number of species, the reconstructed tree is (only) URT (but not CPP). We characterize the common distribution of speciation times in the CPP description, and discuss incomplete species sampling as well as three special model cases in detail: 1) extinction rate does not depend on a trait; 2) rates do not depend on time; 3) mass extinctions may happen additionally at certain points in the past.

\bigskip
\noindent
\textit{Running head.} Macro-evolutionary models and coalescent point processes.\\
\textit{Key words and phrases.}  random tree; birth-death process; incomplete sampling; likelihood; inference.


\newpage

\section*{\textsc{Introduction}}


A general, lineage-based forward-time model of macroevolution assumes that speciation and extinction rates may change as a function of (i) time,  (ii) number of co-existing species, (iii) a non-heritable trait (i.e., a trait changing in the same way in all species independently), 
and  (iv) a heritable trait 
  \citep{Stadler2011PNAScommentrary}. Here speciation is implicitly assumed asymmetric, i.e., we distinguish between mother and daughter species and the trait of the mother species is assumed to remain unchanged upon speciation.

Phylogenetic trees of only extant species, i.e., {\it reconstructed phylogenies}, contain information about past speciation and extinction dynamics. \citep{Thompson1975,Nee1994} provides analytic equations for calculating the likelihood of a reconstructed phylogenetic tree, under a model of diversification assuming constant speciation rate $\lambda$ and constant extinction rate $\mu$. These equations allow maximum likelihood inference of speciation and extinction rates based on the knowledge of the reconstructed phylogeny.  In \citet{AlPo2005}, it is  actually shown that when $\lambda=\mu$, the reconstructed tree viewed from a given stem age $\timestop$ is a {\it coalescent point process} (CPP). In \citet{Yang2006,Gernhard2008JTB}, it was shown that this property holds for any values of $\lambda$ and $\mu$, 
and in \citet{Lambert2010}, this result has been generalized to models where the extinction rate is a function of the age.

A CPP with stem age $\timestop$ is specified by draws from independent and identically distributed (iid) one-dimensional random variables until a value larger than $\timestop$ is drawn. Let the first draw being bigger than $\timestop$ be the $n-$th draw. The first $n-1$ draws $H_1,\ldots,H_{n-1}$  give rise to a phylogenetic tree on $n$ tips in the following way (see Fig. 1b): the $n$ tips of the tree are located in a 2-dimensional plot on  $(0,T),(1,T),\ldots,(n-1,T)$. Now the realization $h_k$ of the random variable $H_k$ in the CPP ($k=1,\ldots,n-1$) is located at $(k,T-h_k)$. We obtain the phylogeny as follows. The first branch in the phylogeny is the line between $(0,T)$ and $(0,0)$. Now we proceed iteratively for $k=1,\ldots, n-1$. We add to the phylogeny the vertical line joining $(k,T)$ to $(k,T-h_k)$. Now we add the horizontal line joining $(k,T-h_k)$ to $(m,T-h_k)$ where $m<k$ is the rightmost pre-existing edge at height $T-h_k$ (dotted lines in Fig. 1b).

Thus we can produce reconstructed trees under the CPP by sequentially sampling `points', i.e. speciation times from left to right (horizontally), compared to simulating sequentially speciation (and extinction) events under a speciation-extinction model from bottom to top (vertically).


The common probability density $f$ of the iid random variables $H_1, H_2, \ldots$ is called the {\it  coalescent density}. Knowing the  coalescent density 
allows us to calculate the likelihood of a given reconstructed tree with stem age $T$ and node depths $h_1, \ldots, h_{n-1}$: it is simply the product of $f(h_i)$ over $i=1,\ldots,n-1$, 
times the probability $r(T)$ that the $n$-th draw is larger than $T$ (i.e., $r(T)=\int_{T}^\infty f(t)dt$). This likelihood function can then be used directly for macroevolutionary parameter inference using maximum likelihood or Bayesian methods. The likelihood of a given phylogeny with stem age $T$ \emph{conditioned} on the  number $n$ of species can be calculated by taking the product of $f_T(h_i)$ over $i=1,\ldots,n-1$, where $f_T$ is $f$ conditioned on the draw being smaller than $T$, that is, $f_T (h_i)= f(h_i)/\left(1-r(T)\right)$. Note that  these calculations can only be done with the knowledge of the stem age or alternatively the crown age (or under some prior distribution thereof). This is in contrast with random tree models that are stationary in time, as those used in population genetics (e.g., Kingman coalescent \citep{Kin82}).

Furthermore, based on the CPP representation,  fast simulation algorithms for phylogenetic trees with stem age $T$  and additionally fixing or not fixing the number of species $n$ were developed: essentially only $n-1$ one-dimensional random variables (corresponding to the $(n-1)$ speciation times) have to be sampled \citep{Stadler2010SystBiol}. As classical forward-in-time simulation tools have to account for each speciation and extinction event, and the number of events may be much higher than $n$ in the case of high extinction rates, the CPP-based simulations are in particular advantageous for high extinction rates.
Even for low extinction rates, the CPP-based method remains much more efficient for the simulation of reconstructed trees when there is a need to fix the number of tips $n$ together with the stem age $T$.

A common feature of the coalescent point process with $n$ tips is that it induces, ignoring time and {\it orientation} (see below), the same distribution on \emph{ranked tree shapes} with $n$ tips as that induced (ignoring labels) by the uniform distribution on ranked trees shapes with $n$ \emph{labelled} tips (URT). Ranked trees are  reconstructed phylogenies in which branch lengths are ignored, but the order of branching times is acknowledged.  The URT distribution is often also called Yule-Harding distribution \cite{Yule1924,Harding1971}.

The aim of this paper is to identify which macroevolutionary models, depending on the scenarios (i)-(iv), give rise to URT reconstructed trees, and among these, which give rise to CPP reconstructed trees. For the latter models, we characterize the  coalescent density of node depths, so that those models can be used for parameter inference and fast simulations. 

We show that  whenever the speciation and/or extinction rates 
depend on (i) time and (ii) number of species, and  the extinction rate possibly further depends on (iii) a non-heritable \emph{asymmetric} (see below) trait, then the distribution induced on ranked tree shapes by the reconstructed tree is URT. We show that if the rates are additionally independent of species number, then the reconstructed tree is a CPP. We provide counterexamples of the last two assertions when the corresponding requirements are not fulfilled, therefore providing a complete characterization of forward-in-time scenarios leading to CPP or to URT reconstructed trees (see Table 1).

We start with a rigorous definition of macroevolutionary models and of associated notions (Section``Macroevolutionary Models''), then define coalescent point processes (Section ``Coalescent point processes''). In the main part of the paper (Section ``Main results''), we characterize the macroevolutionary models which induce URT reconstructed trees, and within these, which induce CPP reconstructed trees. Concerning the latter models, we provide various ways of characterizing the coalescent density of the associated CPP in terms of the model ingredients. We then discuss examples and applications of macroevolutionary models with CPP reconstructed trees (Section ``Three special cases''). We finally study the link between CPP and another popular model for random binary trees with edge lengths, namely the  Kingman coalescent \citep{Kin82} (Section ``What about Kingman coalescent?'').

\section*{\textsc{Macroevolutionary models}}

\subsection*{\textit{Useful definitions}}

We   define a general, lineage-based macroevolutionary model of speciation and extinction. The process starts with one species at time $0$ in the past. A species speciates with rate $\lambda$ and goes extinct with rate $\mu$.
Both rates may change as a function of:
\begin{itemize}
\item[(i)] time;
\item[(ii)] number of co-existing species;
\item[(iii)] a non-heritable trait, i.e., a trait changing in the same way in all species independently, either deterministically like age, or randomly, provided the initial value of the trait follows the same distribution for all species (this distribution may possibly depend on time);
\item[(iv)] a heritable trait, i.e., the initial value of the trait is correlated with the trait value of the mother species at speciation  \citep{Stadler2011PNAScommentrary}. 
\end{itemize}

For models with trait-dependent speciation, conditionally given the initial trait values (at speciation), traits of different species evolve independently through time, with the same probability transitions.  The initial value of a trait is  drawn independently, from the same distribution for all species (non-heritable trait), or from a distribution which depends on mother species trait (heritable trait).

When heritability is less than 100 \%, it becomes important to distinguish between symmetric and asymmetric speciation. Under symmetric speciation, both daughter species are ``new'' species and inherit the mother trait only partially (or not at all in the case of non-heritability). Under asymmetric speciation, one daughter species is the ``new'' species and inherits only partially (and possibly not at all) the mother trait while the other descendant corresponds to the mother species, inheriting the trait to 100 \%.

A non-heritable trait is typically the age of a species. It is equivalent to say that the extinction rate depends on the age and to say that the species lifetime has a probability density which is arbitrary (and not necessarily exponential, as in the case of a constant extinction rate). Another example of a non-heritable trait is speciation stage, as in the model of protracted speciation \citep{rosindell2010protracted, etienne2012prolonging, LME}, where species are incipient following speciation and eventually become good (and can only be detected as species when they are good).

Trees are given an \emph{orientation} \citep{Stadler2009SystBiol}, by distinguishing, upon speciation, between the \emph{left} species (mother species in the case of asymmetric speciation) and the \emph{right} species (daughter species in the case of asymmetric speciation).

The process is stopped at the present (time $\timestop$) leading to a complete tree (Figure \ref{fig:examplePhylo}a; left species is the species with the straight line, right species the added species). The resulting tree consists of extant and extinct species. Pruning all extinct species yields an \emph{ultrametric} tree with stem age $\timestop$ (Figure \ref{fig:examplePhylo}b), in the sense that all tip points are at the same distance $\timestop$ from the root point, called the {\it reconstructed tree} \citep{Nee1994}. Note that when pruning lineages, the orientation of each new branch is obtained by the orientation of the most ancestral branch in the complete tree corresponding to the new branch in the reconstructed tree. 

Most available phylogenies are not complete, in the sense that not all extant species descending from the same ancestor species are sampled and included in the phylogeny. 
There are four main ways considered in the literature for randomly removing tips from a phylogenetic tree: the $p-$sampling or Bernoulli model \citep{Stadler2009JTB,  lambert2009allelic,Stadler2011PNAS,morlon2010inferring,Morlon2011,hallinan2012generalized}, the $n$-sampling model \citep{Stadler2009JTB,etienne2012diversity},  the diversified sampling model \citep{Stadler2011MBE-Sampling}, and the higher-level phylogeny model \citep{Paradis2003,Stadler2012groups}. In the $p$-sampling scheme, given the phylogenetic tree (or the reconstructed tree), each tip is removed independently with probability $1-p$, where $p$ is the so-called \emph{sampling probability}. In the $n$-sampling scheme, given a phylogenetic tree (or a reconstructed tree) with more than $n$ tips, $n$ tips are selected uniformly (e.g., sequentially) and all other tips are removed.  In the diversified sampling scheme, the $n-1$ oldest speciation events are preserved, and each of the $n$ monophyletic clades existing after the $n-1$th speciation event is collapsed into a single lineage. Finally higher-level phylogenies are phylogenies in which monophyletic species clades are collapsed into one tip, and the number of species represented by this tip is recorded. Such phylogenies are common if only one species per (say) genus is added to the phylogeny, but the sizes of the different genera are known. We restrict our  higher-level phylogenies to trees obtained by collapsing descendant monophyletic clade of each lineage existing at a specified time $x_{cut}$ in the past (also called strict higher-level phylogenies \citep{Stadler2012groups}).

An oriented, ultrametric tree with $n$ tips is characterized by its node depths, $h_0=\timestop$ and  $h_1, \ldots, h_{n-1}$, as in Figure \ref{fig:examplePhylo}b. The orientation of the tree implies that $h_i$ ($1\le 1\le n-1$) is the coalescence time between species $i-1$ and species $i$, where species are labelled $0,\ldots, n-1$ from  left to right, and also that  $\max \{h_{i+1},\ldots, h_j\}$ is the coalescence time between species $i$ and species $j$.

\begin{figure}[!ht]
\input{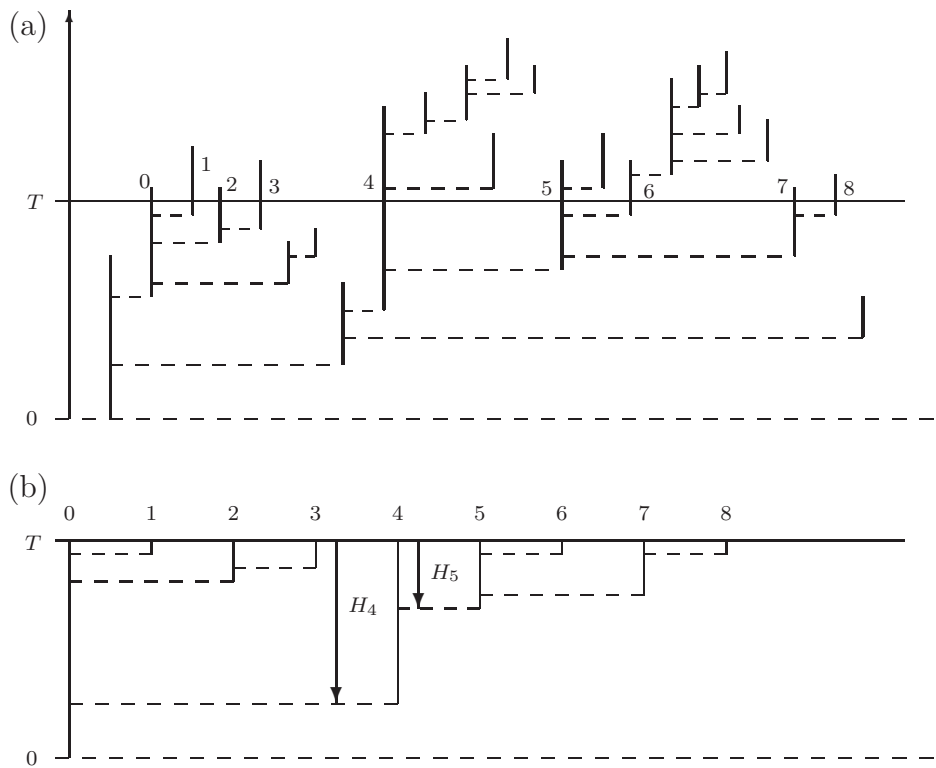}
\caption{a) An oriented phylogenetic  tree generated by a general macroevolutionary lineage-based speciation and extinction model; the $N_T=9$ species extant at $\timestop$ are labelled $0,1, \ldots ,8$ from left to right; b) 
The reconstructed tree obtained from the complete tree in a), showing the coalescence times $H_4$, between species 3 and 4, and $H_5$ between species 4 and 5.}
\label{fig:examplePhylo}
\end{figure}

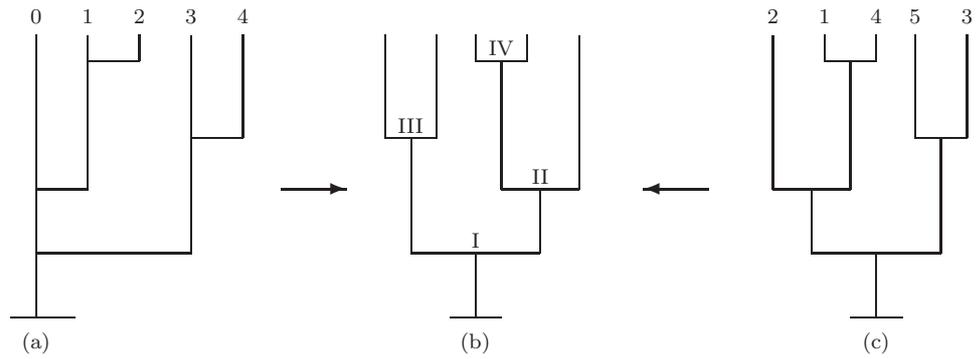
\begin{figure}[ht]
\unitlength 1.7mm 
\linethickness{0.4pt}
\ifx\plotpoint\undefined\newsavebox{\plotpoint}\fi 
\begin{picture}(86,26.5)(-8,0)
\put(4,3){\line(1,0){5}}
\put(6,3){\line(0,1){22}}
\put(10,25){\line(0,-1){12}}
\put(14,25){\line(0,-1){2}}
\put(18,25){\line(0,-1){17}}
\put(18,8){\line(-1,0){12}}
\put(14,23){\line(-1,0){4}}
\put(10,13){\line(-1,0){4}}
\put(22,25){\line(0,-1){8}}
\put(22,17){\line(-1,0){4}}
\put(6,26.5){\makebox(0,0)[cc]{\scriptsize $0$}}
\put(10,26.5){\makebox(0,0)[cc]{\scriptsize $1$}}
\put(14,26.5){\makebox(0,0)[cc]{\scriptsize $2$}}
\put(18,26.5){\makebox(0,0)[cc]{\scriptsize $3$}}
\put(22,26.5){\makebox(0,0)[cc]{\scriptsize $4$}}
\thicklines
\put(25,13){\vector(1,0){5}}
\thinlines
\put(35,17){\line(0,-1){9}}
\put(76,17){\line(0,-1){9}}
\put(33,25){\line(0,-1){8}}
\put(74,25){\line(0,-1){8}}
\put(33,17){\line(1,0){4}}
\put(74,17){\line(1,0){4}}
\put(37,17){\line(0,1){8}}
\put(78,17){\line(0,1){8}}
\put(40,25){\line(0,-1){2}}
\put(67,25){\line(0,-1){2}}
\put(40,23){\line(1,0){4}}
\put(67,23){\line(1,0){4}}
\put(44,23){\line(0,1){2}}
\put(71,23){\line(0,1){2}}
\put(42,23){\line(0,-1){10}}
\put(69,23){\line(0,-1){10}}
\put(42,13){\line(1,0){6}}
\put(63,13){\line(1,0){6}}
\put(48,13){\line(0,1){12}}
\put(63,13){\line(0,1){12}}
\put(45,13){\line(0,-1){5}}
\put(66,13){\line(0,-1){5}}
\put(45,8){\line(-1,0){10}}
\put(76,8){\line(-1,0){10}}
\put(40,8){\line(0,-1){5}}
\put(71,8){\line(0,-1){5}}
\put(38,3){\line(1,0){4}}
\put(69,3){\line(1,0){4}}
\thicklines
\put(58,13){\vector(-1,0){5}}
\put(63,26.5){\makebox(0,0)[cc]{\scriptsize $2$}}
\put(67,26.5){\makebox(0,0)[cc]{\scriptsize $1$}}
\put(71,26.5){\makebox(0,0)[cc]{\scriptsize $4$}}
\put(74,26.5){\makebox(0,0)[cc]{\scriptsize $5$}}
\put(78,26.5){\makebox(0,0)[cc]{\scriptsize $3$}}
\put(6,1){\makebox(0,0)[cc]{\scriptsize (a)}}
\put(40,1){\makebox(0,0)[cc]{\scriptsize (b)}}
\put(71,1){\makebox(0,0)[cc]{\scriptsize (c)}}
\put(40,9){\makebox(0,0)[cc]{\scriptsize I}}
\put(45,14){\makebox(0,0)[cc]{\scriptsize II}}
\put(35,18){\makebox(0,0)[cc]{\scriptsize III}}
\put(42,24){\makebox(0,0)[cc]{\scriptsize IV}}
\end{picture}
\caption{(a) a ranked \emph{oriented} tree with 5 tips (labelled from left to right); (c) a ranked \emph{labelled} tree with 5 tips; (b) the ranked tree shape associated to (a) by ignoring orientation and to (b) by ignoring labels. Under the uniform distribution on ranked oriented trees, the probability of the tree in (a) is $1/(n-1)!= 1/24$; under the uniform distribution on ranked labelled trees, the probability of the tree in (c) is  $2^{n-1}/n! (n-1)! = 1/180$. Under URT, the probability of the tree $\tau$ in (b) is $2^{n-1-c(\tau)}/(n-1)!=1/6$.  The node rankings are indicated by Roman figures, specifying in particular that the split of the three-tip subtree comes before the split of the cherry.
}
\label{fig:exampleURT}
\end{figure}

A {\it ranked tree} is obtained from an ultrametric tree by ignoring branch lengths in the tree but maintaining the information about the order of speciation events. The rank of the most ancestral speciation event is 1, the next speciation event has rank 2, etc (indicated by Roman figures in Figure \ref{fig:exampleURT}b). 
The number of ranked oriented trees on $n$ tips is $(n-1)!$ (number of permutations of the $n-1$ edges different from the leftmost one, which has length equal to $T$). The number of ranked labelled trees with $n$  tips is $n!(n-1)!/2^{n-1}$ (\citet{Edwards1970} and Proposition 2.3.4 in \citet{Steel2003}). 

It can be proven straightforward that the  following two probability distributions on ranked tree shapes with $n$ tips (ranked speciation events, but no orientation, no labels) are equal. These are the probabilities respectively induced
\begin{itemize}
\item
by the uniform distribution on ranked oriented trees after ignoring the orientation;
\item 
by the uniform distribution on ranked labelled trees after ignoring the labels.
\end{itemize}
We denote this probability by URT. By standard calculations, it can be seen that under URT, the probability of a ranked tree shape $\tau$ is
$$\frac{2^{n-1-c(\tau)}}{(n-1)!},$$  where $c(\tau)$ is the number of cherries of $\tau$ (i.e. the number of nodes subtending two tips), see Figure \ref{fig:exampleURT}.


Recall that oriented reconstructed trees are CPP if the node depths are $n$ iid draws with the $n-$th draw being the first draw bigger than $T$. In particular, this means that each reconstructed tree on $n$ tips can be uniquely represented by $n-1$ points, and $n-1$ points uniquely define a reconstructed tree on $n$ tips \citep{Gernhard2008JTB}.
A noticeable feature of the CPP with $n$ tips is that, after ignoring its edge lengths (and orientation), it follows the URT distribution on ranked trees \citep{AlPo2005,Gernhard2008JTB}. Thus,  macroevolutionary model which do not give rise to URT reconstructed trees cannot give rise to CPP reconstructed trees.

We highlight here that any model inducing URT on completely sampled trees also induces URT on trees with incomplete sampling modelled as $p-$ or $n-$ sampling (\citet{Stadler2009SystBiol}, follows from Proposition A5), as well as with diversified sampling (\citet{Stadler2013JEB} and again shown below). Furthermore any model inducing CPP trees induce URT on higher-level phylogenies (as shown below). 

In the following, we will characterize which of the rate dependencies (i)-(iv) induce a URT distribution on complete trees and thus may have a CPP representation. We further investigate which sampling schemes preserve a CPP representation.

\subsection*{\textit{Characterization of macroevolutionary models}}

Recall that trait heritability can be symmetric or asymmetric. For a non-heritable trait, either the traits of both incipient species are reset upon speciation (symmetric speciation, two daughters), or the trait of one (the mother) species remains unchanged upon speciation (asymmetric speciation, one mother and one daughter). We start by showing that there exist symmetric speciation models giving rise to non-URT reconstructed trees,  in simple cases where only one of the speciation/extinction rates is age-dependent and the other rate is constant. 

There are two distinct (ranked or not) trees with 4 tips, the perfectly balanced tree $B$ and the caterpillar tree $C$. Let us start with a symmetric speciation model where extinction rate is zero and speciation rate is $\lambda$ in a small interval $[1-\varepsilon,1]$, and zero outside. If $\lambda$ is sufficiently large, species speciate with high probability at an age close to 1. For $\timestop=2$, reconstructed trees with 4 tips are of type $B$ with arbitrarily high probability, and so cannot follow the URT distribution (see Figure \ref{fig:exampleSym}a).

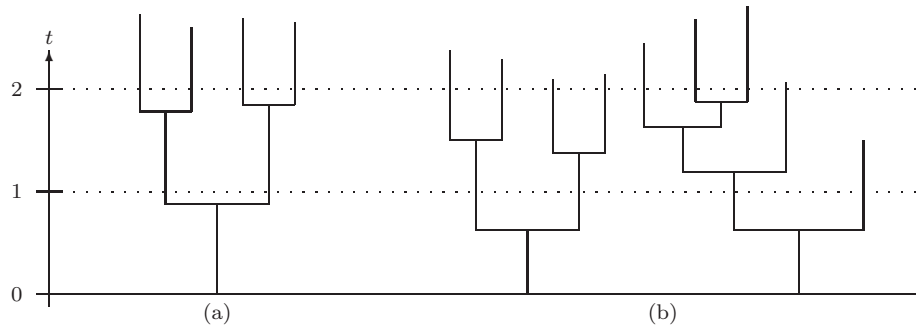
\begin{figure}[ht]
\unitlength 1.7mm 
\linethickness{0.4pt}
\ifx\plotpoint\undefined\newsavebox{\plotpoint}\fi 
\begin{picture}(72,25.375)(-8,0)
\put(5,2){\vector(0,1){20}}
\put(4,11){\line(1,0){2}}
\put(4,19){\line(1,0){2}}
\put(2.5,3){\makebox(0,0)[cc]{\scriptsize $0$}}
\put(2.5,11){\makebox(0,0)[cc]{\scriptsize $1$}}
\put(2.5,19){\makebox(0,0)[cc]{\scriptsize $2$}}
\put(5,23){\makebox(0,0)[cc]{\scriptsize $t$}}
\put(18,3){\line(0,1){7}}
\put(14,10){\line(0,1){7.25}}
\put(16,17.25){\line(-1,0){4}}
\put(12,17.25){\line(0,1){7.5}}
\put(16,17.25){\line(0,1){6.5}}
\put(22,10){\line(0,1){7.75}}
\put(20,17.75){\line(1,0){4}}
\put(24,17.75){\line(0,1){6.375}}
\put(20,17.75){\line(0,1){6.75}}
\put(42,3){\line(0,1){5}}
\put(46,8){\line(-1,0){8}}
\put(38,8){\line(0,1){7}}
\put(46,8){\line(0,1){6}}
\put(36,15){\line(0,1){7}}
\put(40,15){\line(0,1){6.25}}
\put(44,14){\line(0,1){5.75}}
\put(48,14){\line(0,1){6.125}}
\put(36,15){\line(1,0){4}}
\put(44,14){\line(1,0){4}}
\put(63,3){\line(0,1){5}}
\put(68,8){\line(-1,0){10}}
\put(68,8){\line(0,1){7}}
\put(58,8){\line(0,1){4.5}}
\put(62,12.5){\line(-1,0){8}}
\put(62,12.5){\line(0,1){7}}
\put(54,12.5){\line(0,1){3.5}}
\put(57,16){\line(-1,0){6}}
\put(57,16){\line(0,1){2}}
\put(51,16){\line(0,1){6.5}}
\put(55,18){\line(1,0){4}}
\put(59,18){\line(0,1){7.375}}
\put(55,18){\line(0,1){6.375}}
\put(14,10){\line(1,0){8}}
\multiput(4.93,10.93)(.985294,0){69}{{\rule{.4pt}{.4pt}}}
\multiput(4.93,18.93)(.985294,0){69}{{\rule{.4pt}{.4pt}}}
\put(4,3){\line(1,0){68}}
\put(18,1.5){\makebox(0,0)[cc]{\scriptsize (a)}}
\put(52.5,1.5){\makebox(0,0)[cc]{\scriptsize (b)}}
\end{picture}
\caption{Trees with 4 tips at time 2 produced by a symmetric speciation model with age-dependent rates, time flowing upwards, where: (a) extinction rate is 0 and speciation rate is high for ages close to 1 (0 otherwise); (b) extinction rate is high for ages larger than 1 (0 otherwise) and speciation rate is small and constant. Reconstructed trees with 4 tips will be of type $B$ (balanced) with high probability (panels a and b-left). Panel (b-right) shows that a type $C$ tree (caterpillar) requires at least 4 speciations.   
}
\label{fig:exampleSym}
\end{figure}

Now let us consider a symmetric speciation model where speciation rate is constant equal to $\lambda$ and extinction rate is equal to $\mu$ for ages larger than $1-\varepsilon$ and zero otherwise. If $\mu$ is sufficiently large, species 
have lifetimes smaller than (and close to) 1. Assume $\timestop=2$. If all species have lifetimes smaller than 1, at least 4 speciation events have to occur for a reconstructed tree to be of type $C$, whereas only 3 suffice for type $B$ reconstructed trees. If $\lambda$ is very small, most trees never reach $\timestop$, but (the rare) trees with 4 tips are of type $B$ with arbitrarily high probability, and so do not follow the URT distribution (see Figure \ref{fig:exampleSym}b). 

A consequence of the previous paragraph is that symmetric speciation models do not give rise to URT reconstructed trees in general if at least one of the speciation/extinction rates depends on a non-heritable trait. Since non-heritable traits are particular cases of heritable traits, we have shown in general that symmetric trait-dependent speciation models do not give rise to URT reconstructed trees and thus not to CPP. Then we will not consider this class of models further.

From now on, speciation is then assumed to be asymmetric. Recall though that the asymmetry is only important for scenarios (iii) and (iv) (trait-dependent rates). Scenarios (i) (time-dependent rates) and (ii) (rates dependent upon the number of co-existing species) are equivalent under symmetric and asymmetric speciation. We will now investigate which of the scenarios (i) --- (iv) produce URT reconstructed trees, and which of those produce CPP reconstructed trees.


Table 1 summarizes new results obtained in this paper, classifying models and stating which models give rise to URT reconstructed trees, and within these models, which give rise to CPP reconstructed trees. This characterization hopefully facilitates the usage of the different macroevolutionary models in an efficient way.

We provide examples of model classes 4 (speciation rate depends on (iii-iv) a trait, heritable or not; extinction rate can be arbitrary) and 5 (extinction rate depends on (iv) a heritable trait; speciation rate can be arbitrary) leading to non-URT reconstructed trees. This means in particular that such models do not give rise to CPP reconstructed trees in general  (Section ``Model classes 4-5 do not induce URT''). 

We thus focus on speciation rates depending on  (i) time and/or (ii) number of species, and extinction rates not depending on a heritable trait (model classes 1, 2 and 3). In this case, we show that with arbitrary extinction rates (i-iii), we always obtain URT reconstructed trees (Section ``Model classes 1-3 induce URT''). 
However, we show that whenever extinction rates (model class 2) or speciation rates (model class 3) depend on (ii) the number of species, then reconstructed trees are not CPP in general (Section ``Model classes 2-3 do not induce CPP''). 

In fact model classes 1-3 are the only models discussed above belonging to the class of ``species-speciation-exchangeable models'' defined in \cite{Stadler2013JEB} and shown to induce URT; all other models discussed above (including the symmetric speciation models) belong to the ``species-non-exchangeable models''.

\begin{table}[!th]
\begin{center}
\begin{tabular}{|  c | c|c|c|c|c| c|  }
\hline
 Model class & speciation & extinction & URT & CPP \\
  \hline
1&  0,i & 0,i,iii & x & x \\   
 2&  0,i & ii & x & - \\  
 3&  ii & 0-iii & x & - \\ 
  4&iii-iv & 0-iii & - & - \\   
  5 &0-iv & iv & - & -\\
  \hline
\end{tabular}
\end{center}
\caption{Asymmetric speciation models: (0) constant, (i) time-, (ii) number of species-, (iii) non-heritable trait-, (iv) heritable trait- dependent speciation or extinction.
 URT = uniform ranked tree distribution, CPP = coalescent point process. A property (URT, CPP) is satisfied (x) by a model class if it is satisfied under all possible combinations of assumptions on speciation and extinction rates; a property is not satisfied (-) if there is a counterexample for any combination of assumptions.} 
\label{Table:}
\end{table}

\subsection*{\textit{Macroevolutionary model class 1}}

Based on the previous observations, only models in class 1 may give rise to  CPP reconstructed trees. We show that these models actually always give rise to CPP reconstructed trees (Section ``Model class 1 induces CPP'', Theorem \ref{ThmMain}), and so in particular to URT reconstructed trees.
In other words, when the extinction rate $\mu(t,x)$ only depends on (i) time $t$ and (iii) a non-heritable trait $x$, and the speciation rate $\lambda(t)$ only depends on (i) time $t$, the resulting reconstructed trees can be represented by a CPP. 

In Theorem \ref{ThmMain}, we provide a characterization of the one-dimensional coalescent density $f(t)$ of the CPP for any model belonging to class 1. In Proposition \ref{prop:NEW}, we present a way to evaluate, at least numerically, this density. By formulas \eqref{eqn:likelihood1}, \eqref{eqn:likelihood1bis} or \eqref{eqn:likelihood2}, this density can then be used to calculate the likelihood of a reconstructed phylogeny, meaning we can obtain speciation and extinction rate estimates, using maximum likelihood or Bayesian methods, under very general macroevolutionary models.  Further, we are able to simulate reconstructed phylogenies under model class 1 very fast by sampling $n-1$ one-dimensional random variables from the coalescent density conditioned on $T$.

We furthermore show that even under certain incomplete sampling schemes, reconstructed trees of model class 1 remain coalescent point processes (see Section ``Missing tips'').
In Section ``Three special cases'' we discuss some special cases of our model class 1 in detail:
\begin{itemize}
\item Speciation and extinction rates are both functions of time, but uniquely of time ($\lambda(t)$, $\mu(t,x)=\mu(t)$). The coalescent density is given in Proposition \ref{prop:markov}.
The likelihood of the whole tree had previously been derived \citep{Nee1994,Morlon2011, Hoehna} for general $\lambda(t)$, $\mu(t)$,
and in \citep{Stadler2011PNAS}  for piecewise constant rates, but none of the previous work acknowledged the coalescent point process representation.
\item Speciation and extinction rates do not depend on time, but the extinction rate may change deterministically as a function of a non-heritable trait with deterministic initial value upon speciation. This is equivalent to an extinction rate changing as a function of the age $x$ of the species ($\lambda(t)=\lambda$, $\mu(t,x)=\mu(x)$).
 The coalescent density is given in Proposition \ref{PropnMh}. This case extends the analyses made in \citep{Lambert2010}. We illustrate this case by providing explicit densities in the case when lifetimes are deterministic (not random) or follow the Gamma distribution with shape parameter 1 or 2. 
\item We demonstrate how the results can be used for processes featuring {\it mass extinction events}, meaning that at some time point $t$ (or multiple time points) in the past each species becomes extinct independently and instantaneously with a fixed probability $p_t$. Again, the reconstructed tree is a CPP, and its coalescent density  is given in Proposition \ref{prop:bottlenecks}. 
In \citep{Stadler2011PNAS}, the probability density of a tree with mass extinction events was derived for piecewise constant speciation and extinction rates, however again the point process representation was not acknowledged.
\end{itemize}

\section*{\textsc{Coalescent point processes}}
%
We now introduce notation and properties of the CPP which will be used later.
Consider a CPP with age $\timestop$ and coalescent density $f$.
From now on, we denote by $H$ a random variable with this coalescent density $f$. We define
$$
F(t) = \frac{1}{P(H> t)}
\qquad t\ge 0,
$$
the \emph{inverse tail distribution} of $H$. One can recover the coalescent density $f$, from $F$
as follows
$$
f(t)=-\frac{d}{dt}P(H>t)=\frac{F'(t)}{F(t)^2}.
$$ 
Let  $N_\timestop$ be the number of extant species in the coalescent point process. Conditional on $N_\timestop=n$, the node depths $H_1,\ldots, H_{n-1}$ are independent copies of $H$ conditioned on $H\le \timestop$.

\subsection*{\textit{Number of lineages}}

The number of lineages in the coalescent point process present at time $s<\timestop$ is exactly one plus 
the number of node depths larger than $\timestop-s$. By independence, except for the ancestral lineage, this number is geometrically distributed with success parameter $P(H> \timestop \mid H > \timestop-s)$. This can be stated as follows.
\begin{prop}
\label{prop:number lineages}
Let $N_s^\star$ denote the number of lineages at time $s$ in the CPP. Then
$$
P(N_s^\star=k) =  P(H> \timestop \mid H> \timestop -s)\,P(H\le \timestop \mid H> \timestop -s)^{k-1} \qquad k\ge 1,
$$
which is the geometric distribution.
In particular,
$$
E(N_s^\star)= \frac{1}{ P(H> \timestop \mid H> \timestop-s)} = \frac{P(H> \timestop-s)}{P(H> \timestop)} = \frac{F(T)}{F(T-s)}.
$$
\end{prop}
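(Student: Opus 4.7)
\medskip
\noindent\textbf{Proof plan.} My plan is to translate the sequential CPP construction directly into a description of $N_s^\star$ as the waiting time in a Bernoulli scheme. From the CPP construction, the $k$-th edge (for $k\ge 1$) branches off at height $T-h_k$ and extends up to height $T$, so it is present at time $s$ if and only if $h_k\ge T-s$. Together with the always-present ancestral lineage, this gives
\[
N_s^\star \;=\; 1 + \#\{k\ge 1 : H_k > T-s\},
\]
where $H_1,H_2,\ldots$ are drawn iid from $f$ until the first draw exceeds $T$ (the terminating draw is not counted among the $H_k$ generating edges).

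Next I classify each draw as \emph{small} ($H_i\le T-s$), \emph{medium} ($T-s<H_i\le T$), or \emph{large} ($H_i>T$). Small draws contribute nothing to $N_s^\star-1$ and do not terminate the process; medium draws contribute $1$ each and do not terminate; the first large draw terminates. Because the $H_i$ are iid, the subsequence obtained by retaining only the non-small draws is itself iid with the distribution of $H$ conditioned on $H>T-s$, and within this subsequence each entry is independently ``large'' with probability $p:=P(H>T\mid H>T-s)$ and ``medium'' with probability $1-p$. Therefore $N_s^\star-1$ is the number of medium draws appearing strictly before the first large draw in a sequence of iid Bernoulli$(p)$ trials, which is geometric:
\[
P(N_s^\star=k) \;=\; p\,(1-p)^{k-1} \;=\; P(H>T\mid H>T-s)\,P(H\le T\mid H>T-s)^{k-1}, \qquad k\ge 1.
\]

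Finally, the expectation of a geometric$(p)$ random variable on $\{1,2,\ldots\}$ is $1/p$, so
\[
E(N_s^\star) \;=\; \frac{1}{P(H>T\mid H>T-s)} \;=\; \frac{P(H>T-s)}{P(H>T)} \;=\; \frac{F(T)}{F(T-s)},
\]
the last equality by the definition $F(t)=1/P(H>t)$. The only point requiring care is the justification that the ``non-small'' subsequence is genuinely iid with the claimed conditional law, but this is standard since the index of the first draw exceeding $T-s$ is a stopping time and the post-stopping-time draws remain iid and independent of the preceding ones. No single step is a genuine obstacle; the argument is essentially a bookkeeping exercise on the iid structure of the CPP.
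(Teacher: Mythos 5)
Your proof is correct and takes essentially the same route as the paper, which likewise writes $N_s^\star$ as one plus the number of node depths exceeding $T-s$ and deduces the geometric law from the iid structure, with success parameter $P(H>\timestop\mid H>\timestop-s)$. Your small/medium/large classification simply makes explicit the thinning argument that the paper compresses into the phrase ``by independence.''
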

In particular, taking $s=T$, since $N_T = N_T^\star$, we obtain the distribution of the number of extant species at time $\timestop$
\begin{equation}
\label{eqn:law N}
P(N_ \timestop =k) =  P(H> \timestop)\,P(H\le \timestop)^{k-1} \qquad k\ge 1.
\end{equation}

\subsection*{\textit{The likelihood of a reconstructed tree}}
 In this section, we display likelihood formulae for trees produced by a coalescent point process. Recall that CPP trees always have at least one tip, so there is no need to condition them upon survival. 
 
 Under a CPP with inverse tail distribution $F$ and coalescent density $f$, the likelihood ${\mathcal L}$ of the non-labelled tree $\tau$ with known topology, stem age $T$, $n$ extant species and node depths $x_1<\cdots< x_{n-1}$  is given by
\begin{equation}
\label{eqn:likelihood1}
{\mathcal L}(\tau,n\mid T)=\frac{C(\tau)}{F(T)}\prod_{i=1}^{n-1} f(x_i) ,
\end{equation}
where $C(\tau) = 1$ if $\tau$ is oriented and $C(\tau) = 2^{n-1-c(\tau)}$ if $\tau$ is non-oriented. 

Note that if $T$ is the \emph{crown age} of $\tau$, that is, if the two longest edges of $\tau$ both have length $T$, then the likelihood ${\mathcal L}_{\text{c}}(\tau)$ (the subscript `c' stands for `crown age') of the reconstructed tree $\tau$ with crown age $T$, $n$ extant species and node depths $x_1<\cdots< x_{n-2}$ (now there are only $n-2$ node depths strictly smaller than $T$), \emph{conditional on speciation at time 0 and survival of the two incident subtrees}, is  the product, properly renormalized, of the likelihoods of the two subtrees conditional on survival, which equals 
\begin{equation}
\label{eqn:likelihood1bis}
{\mathcal L}_{\text{c}}(\tau,n\mid T)=\frac{C(\tau)}{F(T)^2}\prod_{i=1}^{n-2} f(x_i) ,
\end{equation}
where $C(\tau)$ was defined previously. 

Note that if the tree with stem (resp. crown) age $T$ is \emph{conditioned to have exactly $n$ tips}, then the conditioned likelihoods 
become
\begin{equation}
\label{eqn:likelihood2}
{\mathcal L}(\tau\mid T,n)= C(\tau)\prod_{i=1}^{n-1} f_T(x_i) \quad\mbox{ and resp. }\quad {\mathcal L_{\text{c}}}(\tau\mid T,n)= \frac{C(\tau)}{n-1} \prod_{i=1}^{n-2} f_T(x_i), 
\end{equation}
where $f_T(x)\, dx=P(H\in dx\mid H<T)$, that is, $f_T(x) = f(x) F(T)/(F(T)-1)$. Indeed, for the crown age, the probability to have $n$ tips conditional on two ancestors each having alive descendance at $T$ equals $(n-1)P(H<T)^{n-2} P(H>T)^2$. This also reads
\begin{eqnarray}
\label{eqn:likelihood2bis}
{\mathcal L}(\tau\mid T,n)&=& {\mathcal L}(\tau,n\mid T) \left(\frac{F(T)}{F(T)-1}\right)^{n-1}F(T) \notag \\
{\mathcal L_{\text{c}}}(\tau\mid T,n)&=& {\mathcal L_{\text{c}}}(\tau,n\mid T)\,\frac{1}{n-1} \left(\frac{F(T)}{F(T)-1}\right)^{n-2}F(T)^2 
\end{eqnarray}

\subsection*{\textit{Missing tips}}

We will first discuss that reconstructed trees after $p-$ sampling are CPP reconstructed trees if the completely sampled reconstructed trees are CPP reconstructed trees.
Second, $n-$  sampling induces a URT distribution after sampling if the pre-sampling distribution is also URT \citep{Stadler2009SystBiol}. The same holds for diversified sampling as we show below.
Since the number of tips under these two sampling schemes is $n$ with probability $1$, the number of tips does not follow a geometric distribution, and thus under these sampling schemes trees are not CPP. Actually, they are not even CPP conditioned to have $n$ tips, since their $n-1$ node depths are shown to be correlated.
 Last, we show that  higher-level phylogenies obtained from pre-sampling CPP trees give rise to a URT distribution.
We show  that the tree likelihoods under our four schemes of incomplete sampling can be  readily calculated for any model which has a CPP representation under complete sampling.

\paragraph{The $p$-sampling scheme.}
A common way of modeling trees with missing species is to assume that each tip is sampled \emph{independently} with probability $p$ (i.e. Bernoulli sampling). From the orientation of the tree, we know that in a CPP, the coalescence time between species $i$ and species \tanjaR{$j$} is $\max\{H_{i+1},\ldots, H_j\}$. The number of unsampled species between two consecutive sampled species is a geometric r.v. with success probability $p$.

 Therefore, the genealogy of the sample of a CPP is again a CPP, where $H$ is replaced by the r.v. $H_p$ distributed as the maximum of $K$ independent copies of $H$, where $K$ is an independent geometric r.v. with success probability $p$.
 
 As a consequence,
$$
P(H_p\le t)=\sum_{j\ge 1}p(1-p)^{j-1}P(H\le t)^{j} = \frac{pP(H\le t)}{1-(1-p)P(H\le t)}\qquad t>0.
$$
This can be recorded in the following statement. Also, this connection is further developed in \citep{LS}, where the authors study the effect of tip removals, viewed as contemporary extinctions, on the total length of the tree, also called phylogenetic diversity.
\begin{prop}
\label{prop:sampling}
The genealogy of a Bernoulli($p$)-sample taken from a CPP with inverse tail distribution $F$ is a CPP with typical node depth denoted $H_p$ with inverse tail distribution $F_p$ given by
$$
F_p(t) := \frac{1}{P(H_p>t)} =1-p + pF(t). 
$$
\end{prop}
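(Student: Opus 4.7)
The plan is to observe that this proposition is essentially a two-step consequence of the setup in the paragraph immediately preceding its statement. The first step is to argue that the sampled genealogy inherits the CPP structure; the second step is to algebraically simplify the already-derived formula for the CDF of $H_p$ to reach the stated identity $F_p=1-p+pF$.

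For the first step, I would make precise the orientation-based argument already sketched in the excerpt. Extend the original CPP to an infinite iid sequence $H_1,H_2,\ldots$ of copies of $H$, with the complete tree indexed by the initial segment up to the first index whose $H$-value exceeds $T$. Because the Bernoulli$(p)$ sampling is independent of the $H_i$'s, the gap sizes between consecutive sampled indices are iid Geometric$(p)$ random variables, mutually independent and independent of the $H_i$'s. By the orientation formula $\max\{H_{i+1},\ldots,H_j\}$ for the coalescence time between two species, the node depth of the sampled tree between the $k$-th and $(k{+}1)$-st sampled tips is the maximum of the $H_i$'s over the corresponding block, and these block maxima are therefore iid copies of $H_p$. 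Moreover, the first block whose maximum exceeds $T$ terminates the sampled tree exactly as in the defining stopping rule of a CPP, so the sampled genealogy is indeed a CPP with typical node depth $H_p$.

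For the second step, I would start from the identity
$$P(H_p\le t)=\frac{p\,P(H\le t)}{1-(1-p)P(H\le t)}$$
displayed just before the statement, take the complement, and use $1-(1-p)P(H\le t)=p+(1-p)P(H>t)$ to obtain
$$P(H_p>t)=\frac{P(H>t)}{p+(1-p)P(H>t)}.$$
Inverting yields $F_p(t)=p/P(H>t)+(1-p)=pF(t)+(1-p)$, which is the claimed formula.

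The only part that is not purely mechanical is the first step: one must verify that both the iid property of the new node depths and the CPP stopping rule (draw until the first value exceeds $T$) are preserved by the block-maximum construction. Once this is granted, the algebraic reduction in the second step is a single line of manipulation.
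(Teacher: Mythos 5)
Your proposal is correct and follows essentially the same route as the paper: the paper likewise uses the orientation formula $\max\{H_{i+1},\ldots,H_j\}$, the Geometric$(p)$ gaps between consecutive sampled tips, and the resulting representation of $H_p$ as a geometric block maximum, then records the CDF identity $P(H_p\le t)=pP(H\le t)/\bigl(1-(1-p)P(H\le t)\bigr)$ from which $F_p=1-p+pF$ follows. Your second step merely makes explicit the one-line complement-and-invert computation that the paper leaves implicit, and your first step spells out the block/stopping-rule bookkeeping at a level of care at least matching the paper's.
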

Check that when $p=1$, $1/P(H_p>t)=F(t) = 1/P(H>t)$,  we recover the CPP with typical node depth $H$.

If we are given a complete phylogenetic tree, we can obtain the phylogeny of sampled tips either by 
first reconstructing the phylogenetic tree (i.e., throwing away extinct lineages) and then sampling tips on the reconstructed tree, or by first removing tips from the phylogenetic tree and then reconstructing the sampled tree. This has the following important consequence. For a speciation-extinction model whose reconstructed tree is a CPP with inverse tail distribution $F$, the last proposition implies that the phylogeny of Bernoulli($p$)-sampled species is a CPP with inverse tail distribution $F_p$. Therefore, every property we will state for the complete phylogeny under such models will hold for the incomplete phylogeny (under the Bernoulli($p$)-sampling scheme), provided we change $F$ for $F_p$.

\paragraph{The $n$-sampling scheme.} 
Another way of modeling missing species is to randomly pick $n$ tips out of $N_T\ge n$ tips, by selecting uniformly $n$ tips among $N_T$ (selecting uniformly one tip among $N_T$, then selecting uniformly a second tip among the remaining $N_T-1$, and so on $n$ times). The tree obtained from a CPP after this so-called $n$-sampling scheme is not a CPP any longer. The following results are proved in the Appendix.

\begin{prop}
\label{prop:n-sampling}
The likelihood ${\mathcal L}^s(\tau,m\mid T, n)$ of a reconstructed tree $\tau$ with stem age $T$, $n$ \emph{sampled} species, $m$ missing species (i.e., $n+m$ extant species) and node depths $x_1<\cdots< x_{n-1}$, is given by (writing $x_n=T$)
\begin{equation}
\label{eqn:likelihoodmissing}
{\mathcal L^{\text{s}}}(\tau,m\mid T,n)= {\mathcal L}(\tau,n\mid T) \ {m+n \choose n}\ \sum_{\vec{m}:m_1+\cdots+m_n=m}  \prod_{i=1}^{n}(m_i+1)P(H<x_i)^{m_i}
\end{equation}
where ${\mathcal L}(\tau,n\mid T)$ is given by \eqref{eqn:likelihood1}. 
The same correction factor holds for the likelihood $\mathcal{L^{\text{s}}_{\text{c}}}(\tau\mid T, n)$ of a reconstructed tree $\tau$ with \emph{crown age} $T$, $n$ sampled species, $m$ missing species and node depths $x_1<\cdots< x_{n-2}$, if now we write $x_{n-1}=x_n=T$.

As in \eqref{eqn:likelihood2bis}, the likelihoods ${\mathcal L^{\text{s}}}(\tau\mid T, n, m)$ and ${\mathcal L^{\text{s}}_{\text{c}}}(\tau\mid T, n, m)$ \emph{conditional} on the total number $n+m$ of extant species are given by
\begin{equation}
\label{eqn:likelihood3}
{\mathcal L^{\text{s}}}(\tau\mid T,n,m)= {\mathcal L^{\text{s}}}(\tau,m\mid T,n) \left(\frac{F(T)}{F(T)-1}\right)^{n+m-1}F(T)
\end{equation}
and
\begin{equation}
\label{eqn:likelihood3bis}
{\mathcal L^{\text{s}}_{\text{c}}}(\tau\mid T,n, m)= {\mathcal L^{\text{s}}_{\text{c}}}(\tau,m\mid T,n)\,\frac{1}{n+m-1} \left(\frac{F(T)}{F(T)-1}\right)^{n+m-2}F(T)^2. 
\end{equation}

\end{prop}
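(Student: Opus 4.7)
The plan is to compute the density of the event ``the sampled tree equals $\tau$ and the number of missing species equals $m$'' by conditioning on the underlying full CPP having $N_T = n+m$ extant tips, averaging over the uniform choice of $n$ sampled tips, and integrating out the node depths of the full tree that are not observable from $\tau$.

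First I would recall that the joint density of the unconditioned CPP at level $n+m$, evaluated at $(h_1,\ldots,h_{n+m-1})\in(0,T)^{n+m-1}$, equals $P(H>T)\prod_i f(h_i) = F(T)^{-1}\prod_i f(h_i)$. Then I would parametrize each $n$-subset of $\{0,1,\ldots,n+m-1\}$ by a vector of gap sizes $(m_0,m_1,\ldots,m_n)$, where $m_0$ counts the unsampled species to the left of the leftmost sampled species, $m_k$ (for $1\le k\le n-1$) counts those between the $k$th and $(k+1)$th sampled species, and $m_n$ counts those to the right of the rightmost sampled species. The $\binom{n+m}{n}$ compositions of $m$ into $n+1$ non-negative parts are in bijection with the $n$-subsets, each chosen by the $n$-sampling with probability $1/\binom{n+m}{n}$.

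Next I would compute the density given a specific sampling with positions $p_1<\cdots<p_n$. By the CPP construction, the $k$th left-to-right node depth of the sampled tree equals $x_k = \max\{h_{p_k+1},\ldots,h_{p_{k+1}}\}$, the maximum of $m_k+1$ iid copies of $H$ in $(0,T)$, whose density at $x_k$ equals $(m_k+1)f(x_k)P(H\le x_k)^{m_k}$ by the standard order-statistics formula. The $h_i$'s at positions $\le p_1$ or $>p_n$ do not appear in $\tau$ and integrate out over $(0,T)$ to a factor $P(H<T)^{m_0+m_n}$. Multiplying the sampling probability, the density contributions from each middle gap, and the endpoint integration, and then summing over all compositions, one isolates $\prod_{k=1}^{n-1}f(x_k) = F(T)\cdot{\mathcal L}(\tau,n\mid T)$ as the prefactor.

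The final step is a reindexing that unifies the two endpoint contributions with the middle-gap contributions into a single symmetric product. For fixed middle gaps $(m_1,\ldots,m_{n-1})$, summing over $(m_0,m_n)$ with $m_0+m_n=M:=m-\sum_{k=1}^{n-1}m_k$ yields $(M+1)P(H<T)^M$, which coincides with a single factor $(m_n+1)P(H<x_n)^{m_n}$ under the convention $x_n:=T$, $m_n:=M$. This collapses the $(n+1)$-gap sum into the $n$-gap sum in the proposition, recovering the symmetric product $\prod_{i=1}^{n}(m_i+1)P(H<x_i)^{m_i}$. The conditional likelihoods in \eqref{eqn:likelihood3}--\eqref{eqn:likelihood3bis} then follow by dividing by $P(N_T=n+m)$, with the extra $1/(n+m-1)$ factor in the crown-age version reflecting conditioning on both incident subtrees surviving, exactly as in the derivation of \eqref{eqn:likelihood2bis}. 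The main hurdle is the combinatorial bookkeeping --- tracking free versus constrained $h_i$'s under each sampling, using the bijection between $n$-subsets and $(n+1)$-gap compositions, and carrying out the collapse of the two endpoint contributions into the single ``$i=n$'' term. All of the analytic ingredients (max-of-iid densities, integration of the free $h_i$'s, uniform subset sampling) are then elementary.
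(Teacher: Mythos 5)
Your proposal is correct and follows essentially the same route as the paper's appendix proof: equal weight $n!\,m!/(n+m)!$ for each gap composition $(m_0,\ldots,m_n)$ of the $m$ unsampled tips, the max-of-$(m_k+1)$-iid density $(m_k+1)f(x_k)P(H<x_k)^{m_k}$ for each observed node depth (the paper reaches this by differentiating the joint distribution function rather than writing the order-statistics density directly), integration of the two outer gaps to $P(H<T)^{m_0+m_n}$, and the identical collapse of the sum over $(m_0,m_n)$ into the single $i=n$ factor under the convention $x_n=T$, with the conditional versions obtained by dividing by the geometric probability of $N_T=n+m$ (respectively by $(n+m-1)P(H<T)^{n+m-2}P(H>T)^2$ in the crown-age case). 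One remark: carried through literally, your derivation --- like the paper's own appendix computation --- yields the prefactor $n!\,m!/(n+m)!=\binom{m+n}{n}^{-1}$ rather than the $\binom{m+n}{n}$ printed in \eqref{eqn:likelihoodmissing} (sanity check at $n=1$: the likelihood must reduce to $P(N_T=m+1)=P(H>T)P(H<T)^m$, which the inverse binomial gives and the printed factor does not), so the displayed binomial is a typo in the statement that your argument silently corrects.
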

A consequence of the last proposition is that node depths after $n-$sampling are not iid any more, conditionally or not on the total number $m$ of tips. Thus, after  $n-$sampling, CPP trees are not CPP any longer.

When $m$ is large, the right-hand side in \eqref{eqn:likelihoodmissing} is hardly computable. There is no simpler formula available for this likelihood. Nevertheless, we are able to provide a quite simple formula for the multivariate distribution function of the node depths of the tree, as we now show.

Relabel the $n$ sampled tips $1, 2,\ldots, n$ ranked in the same order as they were in the initial coalescent point process and set $H_i'$ the coalescence time between sampled tip $i$ and sampled tip $i+1$, $i=1,\ldots, n-1$. 
Setting
$$
p_0:= P(H<T)\quad\mbox{ and }\quad p_i:= P(H<x_i)\quad i=1,\ldots, n,
$$
if $x_1, \ldots, x_{n-1}$ are pairwise distinct, we get
\begin{multline}
\label{eqn:n-sampling}
P(N_T=n+m, H_1' <x_1, \ldots, H_{n-1}'<x_{n-1}) \\= {m+n \choose n}\ (1-p_0)p_1\cdots p_{n-1}\ 
\left(  
\sum_{i=1}^{n-1} \frac{p_i^{m+n}}{(p_i-p_0)^2\prod_{j=1,\ldots, n-1, j\not=i} (p_i-p_j)}\right.\\
\left. -
\sum_{i=1}^{n-1} \frac{p_0^{m+n}}{(p_0-p_i)\prod_{j=1}^{n-1} (p_0-p_j)}
+
 \frac{(n+m) p_0^{n+m-1}}{\prod_{j=1}^{n-1} (p_0-p_j)} 
\right).
\end{multline}
This shows in particular that the node depths $H_1',\ldots, H_{n-1}'$ have the same distribution, given by
\begin{equation}
\label{eqn:n-sampling1}
P(N_T=n+m, H_1'<x_1) = n(1-p_0)p_0^{n-2}p_1(p_0-p_1)^{-n} \int_{p_1}^{p_0} y^m (y-p_1)^{n-1}\, dy.
\end{equation}

\paragraph{Diversified sampling.}
Diversified sampling, defined as picking $n$ tips such that the most ancient speciation events are kept \citep{Stadler2011MBE-Sampling}, is not a CPP, as we now show. 
 Diversified sampling essentially means that we pick the $n-1$ deepest nodes ($n-1$ first order statistics of $(H_i)$) yielding the following results (see last paragraph and also \citep{Stadler2011MBE-Sampling}, p.2581, bottom left equation). The likelihood ${\mathcal L}^\text{d}(\tau,m\mid T, n)$ of a reconstructed tree $\tau$ with stem age $T$, $n$ sampled species,  $m$ missing species (i.e., $n+m$ extant species), and node depths $x_1<\cdots< x_{n-1}$, is given by 
$${\mathcal L^{\text{d}}}(\tau,m\mid T,n)= {\mathcal L}(\tau,n\mid T)\ {m+n-1 \choose n-1} \ P(H<x_1)^m.
$$
The likelihood  ${\mathcal L}^\text{d}(\tau\mid T, n,m)$ \emph{conditional} on the total number of missing tips is obtained as usual from ${\mathcal L^{\text{d}}}(\tau,m\mid T,n)$ by dividing it by $P(H<T)^{n+m-1} P(H>T)$.
Clearly, the likelihood cannot be factorized as a product of identical terms, and so after diversified sampling, CPP trees are not CPP any longer, conditionally or not on the total number $m$ of tips.
However, each permutation of branching times is equally likely, meaning the tree distribution obtained from pre-sampling CPP trees  is URT. We highlight that diversified sampling operates directly on trees ignoring branch lengths. Thus, as CPP trees induce URT,  we  showed that pre-sampling URT induces URT after diversified sampling.  

%

\paragraph{Higher-level phylogenies.} \label{SecHigherOrder}
In higher-level phylogenies, not all species are included in a reconstructed phylogeny, because some monophyletic clades are collapsed into one tip, with this tip having the number of tips in the original subtree assigned (numbers of species in a clade). Here we assume each lineage present at time $x_{cut}$ in the past is collapsed into one tip representing a clade of size $k_i$ ($i=1,\ldots,n$ for a higher-level phylogeny on $n$ tips), and we define $k=\sum_{i=1}^{n} k_i$. 
Previously only likelihood inference methods assuming constant speciation and extinction rates were available \citep{Paradis2003,Stadler2012groups}. However, the CPP representation facilitates the calculation of the tree likelihood, which is given by
\begin{equation}
\label{eqn:likelihoodHO}
\mathcal{L}^\text{hl}(\tau,k,n\mid T) = {\mathcal L}(\tau,n\mid T)\ P(H<x_{cut})^{k-n}.
\end{equation}

The likelihood $\mathcal{L}^\text{hl}(\tau\mid T,n,k)$ conditional on $n$ clades and $k$ extant species is obtained from $\mathcal{L}^\text{hl}(\tau,k,n\mid T)$ by dividing it by $P(H<x_{cut})^{k-n} P(x_{cut}<H<T)^{n-1} P(H>T) {k-1 \choose n-1}$.

The likelihood $\mathcal{L}^\text{hl}(\tau, k\mid T,n)$ conditional on $n$ extant lineages at depth $x_{cut}$ (clades) is obtained from $\mathcal{L}^\text{hl}(\tau,k,n\mid T)$ by dividing it by 
\begin{eqnarray*} & &
\sum_{k=n}^\infty  P(H<x_{cut})^{k-n} P(x_{cut}<H<T)^{n-1} P(H>T) {k-1 \choose n-1} \\
&=& P(H<T\mid H>x_{cut})^{n-1} P(H>T \mid H>x_{cut}),
\end{eqnarray*} 
which indeed is the probability to have $n$ extant lineages at depth $x_{cut}$, according to Proposition \ref{prop:number lineages}. Note that the likelihood in \eqref{eqn:likelihoodHO} can be written in product form, so that the pairs constituted by node depth and clade size are iid random numbers. 
Furthermore, each permutation of branching times is equally likely, meaning the higher-level phylogeny distribution obtained from pre-sampling CPP trees (ignoring the tip labels) is URT.

\section*{\textsc{Main results}}

In this section, we will prove the statements of Table 1.

\subsection*{\textit{Model classes 4-5 do not induce URT}}

Here, we first give a counter-example of a model in class 4 (trait-dependent speciation rate) which does not induce URT, even in the absence of extinction (zero extinction rate). The trait under consideration is the age, which is a non-heritable trait. Since non-heritable traits are particular cases of heritable traits, this counter-example is sufficient to prove that model class 4 does not induce URT. Then we give a counter-example of a model in class 5 (heritable trait-dependent extinction rate) which does not induce URT, even when the speciation rate is constant.

Let us start with model class 4.
Suppose speciation happens deterministically in each species once it reaches age 1. The resulting reconstructed tree is a so-called caterpillar tree, i.e. a tree where each speciation event has only a single species descending to the left and all other species descending to the right. This means that the caterpillar tree has probability 1 and all other ranked trees have probability 0, which is obviously different from a uniform distribution on ranked trees. This counter-example does not rigorously fit our general model, since the speciation rate is infinite at age 1, but can be modified as follows. If the speciation rate is positive (and finite) inside an arbitrarily small time window around age 1 and zero outside, most trees will only have one extant species at time $T$, but conditional on having $n$ species extant at $T$, the probability of a caterpillar tree can be arbitrarily close to 1 (see Figure \ref{fig:counterexamples}a). 

In the case of model class 5, we can also produce reconstructed trees which are caterpillar trees with a high probability. Now species can be of two types, long-lived (type 0, extinction rate 0), or short-lived (type 1, extinction rate $\mu$). The trait under consideration is the pair $(i,a)$, where $i$ is the type of the species and $a$ its age. The speciation rate is constant. The inheritance is as follows: if the age of the mother species upon speciation is close to 1 (in the sense of the previous example), and if this species is of type 0, then the type of the incipient species is 0. In all other cases, the type of the incipient species is 1. Age of the incipient species is set to 0 as usual. If $\mu$ is large, then in such a model, most trees only have one extant species at time $T$. In (the rare) trees with more than 1 tip, all tips will be of type 0, born from species of type 0 at age 1, with a high probability. This shows that reconstructed trees with a fixed number of tips will be caterpillar trees with an arbitrarily  high probability, as $\mu$ gets large enough (see Figure \ref{fig:counterexamples}b).

\subsection*{\textit{Model classes 1-3 induce URT}}

We want to show that when the speciation rate and the extinction rate possibly both depend on (i) time and (ii) number of species, and that the extinction rate possibly further depends on (iii) a non-heritable trait, then the distribution induced on ranked tree shapes by the reconstructed tree is URT. This property holds even if the distribution of the non-heritable trait at birth depends on the number of coexisting species. \cite{Aldous2001} proved the statement for (i) and (ii), and \cite{Stadler2013JEB} provided a non-formal argument for (iii).

The formal argument relies on the fact that at each branching event, the lineage on which sprouts the incipient lineage is chosen uniformly among existing lineages. Then since further extinction and speciation events do not depend on the orientation of the tree, regrafting subtrees of the complete tree on other lineages but at the same time, does not change the probability of the complete tree. This property obviously carries over to the reconstructed tree, which has the following consequence. 

The reconstructed tree is an oriented, ultrametric tree with $n$ tips and node depths $H_1$, \ldots, $H_{n-1}$. Let $\tau_i$ be the subtree descending from the $i$-th branch, that is the tree spanned by tips $i, i+1,\ldots, \sigma(i, H_i)-1$, where $\sigma(i,x):=\min\{k\in\{i+1,\ldots,n\}: H_k >x\}$ (with the convention $H_n=+\infty$). The tree obtained after regrafting $\tau_i$ on the $j$-th branch (provided $H_j>H_i$) is the oriented tree whose node heights remain in the same order, except that the block $(H_i, H_{i+1},\ldots, H_{\sigma(i, H_i)-1})$ has been inserted between branches $\sigma(j, H_i)-1$ and $\sigma(j, H_i)$. According to the previous paragraph, this oriented tree has the same distribution as the initial reconstructed tree. 

By composing several such subtree regraftings, we easily see that we can perform any permutation on edges without changing the probability of the oriented tree. Therefore, each oriented tree has the same probability, which induces URT on the (unoriented) reconstructed tree.



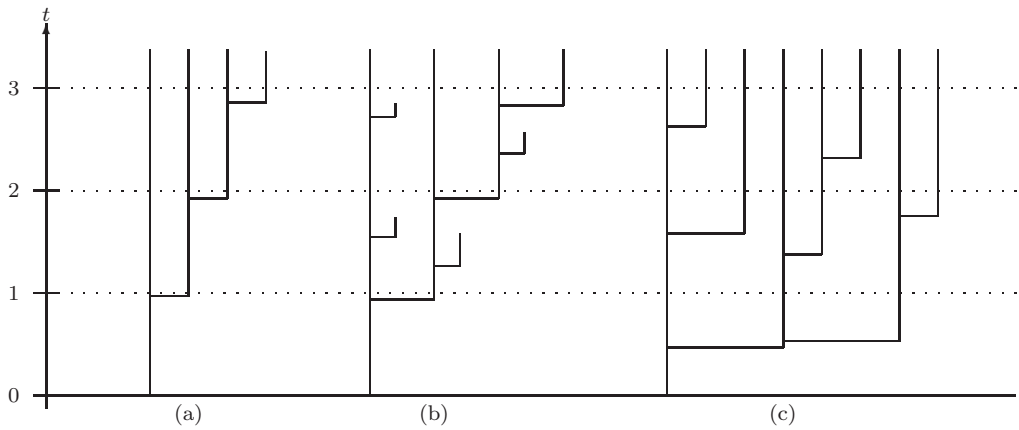
\begin{figure}[!ht]
\unitlength 1.7mm 
\linethickness{0.4pt}
\ifx\plotpoint\undefined\newsavebox{\plotpoint}\fi 
\begin{picture}(79,33.75)(-6,0)
\put(3,4){\line(1,0){2}}
\put(3,12){\line(1,0){2}}
\put(3,20){\line(1,0){2}}
\put(3,28){\line(1,0){2}}
\put(29,4){\line(0,1){27}}
\put(29,11.5){\line(1,0){5}}
\put(34,11.5){\line(0,1){19.5}}
\put(34,19.375){\line(1,0){5}}
\put(39,19.375){\line(0,1){11.625}}
\put(39,26.625){\line(1,0){5}}
\put(44,26.625){\line(0,1){4.375}}
\put(34,14.125){\line(1,0){2}}
\put(36,14.125){\line(0,1){2.5}}
\put(29,16.375){\line(1,0){2}}
\put(31,16.375){\line(0,1){1.5}}
\put(29,25.75){\line(1,0){2}}
\put(31,25.75){\line(0,1){1}}
\put(39,22.875){\line(1,0){2}}
\put(41,22.875){\line(0,1){1.625}}
\put(52,4){\line(0,1){27}}
\put(52,25){\line(1,0){3}}
\put(55,25){\line(0,1){6}}
\put(52,7.75){\line(1,0){9}}
\put(61,7.75){\line(0,1){23.25}}
\put(61,8.25){\line(1,0){9}}
\put(70,8.25){\line(0,1){22.75}}
\put(61,15){\line(1,0){3}}
\put(64,15){\line(0,1){16}}
\put(64,22.5){\line(1,0){3}}
\put(67,22.5){\line(0,1){8.5}}
\put(70,18){\line(1,0){3}}
\put(73,18){\line(0,1){13}}
\put(4,3){\vector(0,1){30}}
\put(12,4){\line(0,1){27}}
\put(12,11.75){\line(1,0){3}}
\put(15,11.75){\line(0,1){19.25}}
\put(15,19.375){\line(1,0){3}}
\put(18,19.375){\line(0,1){11.625}}
\put(18,26.875){\line(1,0){3}}
\put(21,26.875){\line(0,1){4}}
\put(4,4){\line(1,0){75}}
\multiput(3.93,11.93)(.986842,0){77}{{\rule{.4pt}{.4pt}}}
\multiput(3.93,19.93)(.986842,0){77}{{\rule{.4pt}{.4pt}}}
\multiput(3.93,27.93)(.986842,0){77}{{\rule{.4pt}{.4pt}}}
\put(52,16.625){\line(1,0){6}}
\put(58,16.625){\line(0,1){14.375}}
\put(1.5,4){\makebox(0,0)[cc]{\scriptsize $0$}}
\put(1.5,12){\makebox(0,0)[cc]{\scriptsize $1$}}
\put(1.5,20){\makebox(0,0)[cc]{\scriptsize $2$}}
\put(1.5,28){\makebox(0,0)[cc]{\scriptsize $3$}}
\put(4,33.75){\makebox(0,0)[cc]{\scriptsize $t$}}
\put(15,2.5){\makebox(0,0)[cc]{\scriptsize (a)}}
\put(34,2.5){\makebox(0,0)[cc]{\scriptsize (b)}}
\put(61,2.5){\makebox(0,0)[cc]{\scriptsize (c)}}
\end{picture}
\caption{Trees illustrating the counterexamples for (a) model class 4; (b) model class 5; (c) model classes 2-3. As in main text, (a) reconstructed trees in model class 4 are not URT: age-dependent speciation rates can produce caterpillar trees with high probability (w.h.p.); (b) reconstructed trees in model class 5 are not URT: heritable trait-dependent extinction rates can produce caterpillar trees w.h.p. (short edges are of type 1 and long edges are of type 0); (c) reconstructed trees in model class 2-3 are not CPP: rates dependent on the number of species can produce trees where the first and second speciations are arbitrarily close w.h.p.}
\label{fig:counterexamples}
\end{figure}

\subsection*{\textit{Model classes 2-3 do not induce CPP}}

Here, we show that when extinction rates (model class 2) or speciation rates (model class 3) depend on the number of species, reconstructed trees cannot be CPP in general.

We first provide a counter-example in model class 2. Assume that the speciation rate is constant equal to $\lambda$, and that the extinction rate is equal to $\mu$ when the number of species is two, and to 0 otherwise. If $\mu$ is large, then in such a model, most trees only have one extant species at time $T$. In (the rare) trees with two tips, the unique branching time of the reconstructed tree is close to $T$ with high probability. In (the rare) trees with more than two tips, the first branching time of the reconstructed tree is the first speciation event followed sufficiently closely by a second speciation event to have three co-existing species before the rapid extinction caused by the co-existence of two species. Therefore, in such trees, the second branching time closely follows the first branching time with high probability, whereas further branching times are spaced as in a Yule process. This shows that the reconstructed tree has correlated node depths, and therefore is not a CPP (see Figure \ref{fig:counterexamples}c).

We now treat model class 3. Recall that the number of speciation events in a CPP follows a geometric distribution. In particular, the probability of having $K$ speciation events is strictly positive for any $K$. However, 
this will not be the case when speciation rate is zero as soon as the number of species is greater than $K$. 
Thus, reconstructed trees under models of class 3 cannot be CPP in general. However, one could wonder whether they keep the property of CPP that conditional on the number of tips, node depths are independent with the same distribution (as is the case in the previous example where the speciation rate is constant for less than $K$ co-existing species). But there is a counter-example very similar to the one displayed for model class 2. Assume that the extinction rate is zero and that the speciation rate is equal to $C\lambda$ when the number of species is 2, and to $\lambda$ otherwise. If $C$ is large, then  as earlier, the first and second branching times of the reconstructed tree will be closer than others, showing that node depths are correlated (see Figure \ref{fig:counterexamples}c).

\subsection*{\textit{Model class 1 induces CPP}}

Here, we consider a diversification model in class 1, that is, the extinction rate may depend on a non-heritable trait and on absolute time, and the speciation rate may depend on absolute time (only).

More specifically, the instantaneous speciation rate at time $t$ is denoted $\lambda(t)$ for all species and the instantaneous extinction rate at time $t$ of a species carrying trait value $x$ at this time is denoted $\mu(t,x)$. The fact that the trait is not heritable means that upon speciation at time $t$, the trait of the daughter species is drawn from a distribution $\nu_t(dx)$ that may depend on $t$, and that the trait of the mother species is not altered by the speciation event. In addition, traits of different species, conditional on the trait values at speciation, change independently through time according to the same (possibly stochastic, possibly time-inhomogeneous) dynamics. Age is a typical example of a non-heritable trait. For simplicity we will assume that a trait is always one-dimensional.
 
In order to prove future statements in this section, we need to remind the reader of some mathematical properties of instantaneous rates.

\subsubsection*{Precisions about rates}

The meaning of rate has to be taken in the usual mathematical sense. To say that the speciation rate at time $t$ is $\lambda(t)$ is equivalent to saying that a given species extant at time $t$ gives birth to a daughter species in the time interval $(t, t+h)$ with probability $h\lambda(t)+o(h)$ as $h$ goes to 0. Equivalently, the number of new species born from the same species during the time interval $(a,b)$ follows a Poisson distribution with parameter $\int_a^b\lambda(t)\, dt$. In particular, a species extant during the time interval $(a,b)$ does not speciate during this interval with probability
$$
\exp\left(-\int_a^b \lambda(t)\, dt\right).
$$
Similarly, a species carrying trait value $x$ at time $t$ becomes extinct during the time interval $(t, t+h)$ with probability $h\mu(t,x)+o(h)$ as $h$ goes to 0. Now if a species born at time $a$ carries trait value $X_t$ at time $t$, for $t\in (a,b)$, then the probability to not become extinct before time $b$ equals
$$
\exp\left(-\int_a^b \mu(t, X_t)\, dt\right).
$$
To obtain the probability that a species born at time $a$ with trait $x$ survives at least until time $b$, the last quantity has to be averaged over all trait dynamics with initial starting point $X_a=x$ (see forthcoming Equations \eqref{eqn:hts} and \eqref{eqn:ustx}).

\subsubsection*{Characterizing the CPP under model class 1}
We start with one species at time 0, we condition the tree to have at least one extant species at time $T$, and we label $0, 1, \ldots, N_T-1$ the $N_T$ species extant at time $T$ from left to right, assuming that the tree is oriented. Recall that the node depths of the reconstructed tree are denoted $H_0=T$ and then $H_1, \ldots, H_{N_T-1}$, where $H_i$ is the coalescence time between species $i-1$ and species $i$ (see Figure \ref{fig:examplePhylo}).
The proof of our main result below is put to the Appendix.
\begin{thm} \label{ThmMain}
Consider a macro-evolutionary tree generated by a model of class 1 (notation specified above), started at 0 and conditioned on having at least one species extant at time $\timestop$. 
The oriented reconstructed tree is a coalescent point process with typical node depth $H$ whose inverse tail 
distribution is given by
$$
F(t):=\frac{1}{P(H>t)} =\exp\left( \int_{\timestop-t}^ \timestop \lambda(s)\, (1-q(s))\,ds\right)\qquad t\in[0,T],
$$
where $q(t)$ denotes the probability that a species born at time $t$ has no descendants by time $\timestop$.
\end{thm}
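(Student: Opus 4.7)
The core calculation, which I would tackle first, is to show that along any lineage of the reconstructed tree, reconstructed-tree branchings form a non-homogeneous Poisson process with intensity $\lambda(s)(1-q(s))$. I would derive this by a Doob $h$-transform. For a species alive at time $s$ with trait $x$, write $u(s,x)$ for her probability of having a descendant at $T$. If she speciates at $s$, her continuation keeps trait $x$ (survival probability $u(s,x)$) and the newly-born daughter has trait drawn from $\nu_s$ (survival probability $1-q(s)$), independently. Conditioning on the mother belonging to the reconstructed tree, her speciation rate is $h$-biased to $\lambda(s)\,\bigl(1-(1-u(s,x))q(s)\bigr)/u(s,x)$, while the conditional probability that both sides survive -- which is exactly a reconstructed-tree branching -- equals $u(s,x)(1-q(s))/\bigl(1-(1-u(s,x))q(s)\bigr)$. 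The product is the clean value $\lambda(s)(1-q(s))$, with the mother's trait $x$ cancelling out. This cancellation is the essential feature of class 1 (speciation purely time-dependent, extinction depending only on time and a non-heritable trait) that lets the reconstructed tree be a CPP.

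Second, I would apply this Poisson intensity to the lineage of the leftmost extant species $0$. The most recent reconstructed-tree branching on her lineage occurs at some time $s^{\ast}=T-H_1$ whose law is immediately
\[
P(H_1>t)\;=\;\exp\!\Bigl(-\!\int_{T-t}^{T}\lambda(s)(1-q(s))\,ds\Bigr)\;=\;1/F(t),
\]
as claimed. Conditional on $H_1=h_1$, the daughter subtree $\mathcal{T}_1$ rooted at $s^{\ast}$ is, by the non-heritability of her initial trait and the independence of subsequent dynamics, an independent macro-evolutionary tree of class 1 started at $s^{\ast}$ and conditioned on survival. Because the function $F(t)$ is defined intrinsically through the absolute time window $[T-t,T]$, an induction on the stem age gives that $\mathcal{T}_1$ is itself a CPP with the same $F$.

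To finish, I would iterate: the further reconstructed-tree branchings below $s^{\ast}$ on species $0$'s lineage form an independent Poisson process by the renewal property, each rooting its own independent CPP subtree. Combining the within-subtree depths and the between-subtree boundary depths, and using the memoryless-type identity $P(H>t\mid H>t_{0})=F(t_{0})/F(t)$ (immediate from the exponential form of $F$), one verifies that the joint distribution of $(H_1,\dots,H_{N_T-1})$ matches the CPP sampling scheme exactly: iid copies of $H$ drawn until one exceeds $T$. The main obstacle is really the $h$-transform cancellation in the first step -- it is the only place where the precise form of class 1 is used, and it is what forces the rec-tree branching intensity to depend only on time and not on the (trait-dependent) state of the mother. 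Once this is in hand, the rest is standard Poisson-process bookkeeping and a Markov-property induction on stem age.
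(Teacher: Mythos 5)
Your key computation coincides with the one the paper exploits, just phrased differently: in the conditioned process, the rate of speciations for which both the mother line and the daughter leave extant descendants at $T$ is $\lambda(s)\,u(s,x)(1-q(s))/u(s,x)=\lambda(s)(1-q(s))$, the mother's trait cancelling. In the paper this appears not as an $h$-transform but as Poisson thinning: along an ancestral chain, speciations occur at rate $\lambda(t)$ (trait-free, by the class-1 hypothesis), and each is ``successful'' independently with probability $1-q(t)$ (non-heritability), so successful speciations form a Poisson process with intensity $\lambda(t)(1-q(t))$. Where you genuinely diverge is the global assembly. The paper works ``horizontally'': it conditions on $N_T\ge n$ and on $H_1=h_1,\ldots,H_{n-1}=h_{n-1}$, notes that by the orientation this conditioning makes the birth times $x_0<\cdots<x_k$ of the ancestral chain of tip $n-1$ \emph{deterministic} (they are the record depths among the $h_i$), and then identifies the conditional law of $H_n$ as the probability that the Poisson number of successful speciations on $[T-t,T)$, summed over the chain intervals $[x_j,x_{j+1})$, vanishes; induction on $n$ gives the CPP property. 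You instead decompose ``vertically'' along the spine of tip $0$, into a Poisson process of branch points each carrying an independent conditioned subtree.

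The genuine gap is your ``induction on the stem age''. Stem age is a continuous parameter, so there is no base case and no inductive step: the assertion that the subtree rooted at $s^{*}$ is a CPP with the same $F$ is exactly the theorem being proved, with start time $s^{*}$ in place of $0$, and as written the argument is circular. This is fixable, but it requires a real idea that your sketch omits: either replace it by induction on the (discrete) number of tips, or verify directly that your spine decomposition characterizes the target law. For the latter, note that in a CPP the branch depths on tip $0$'s lineage are the \emph{record values} of the iid sequence $H_1,H_2,\ldots$ (since the coalescence time of tips $0$ and $j$ is $\max\{H_1,\ldots,H_j\}$); records of an iid sequence form a Poisson process with intensity $f(t)\,dt/P(H>t)=d\log F(t)$, which here equals $\lambda(T-t)(1-q(T-t))\,dt$ in depth, and between records the sub-record values are iid copies of $H$ conditioned below the current record. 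Matching these two ingredients against your decomposition identifies the law, but this matching is precisely the ``standard bookkeeping'' you wave at in your last step; it, or the paper's conditional computation of $H_n$, is where the theorem is actually clinched.

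One further caution on your first step: the spine of the leftmost tip is conditioned on more than having a descendant at $T$ --- at each generation change, the mother's continuation after the hand-off must leave \emph{no} extant descendants, otherwise tip $0$ would not be leftmost. A rigorous version must check that this extra conditioning on the future does not bias the speciation point process on the spine intervals. It does not, but only because the speciation rate depends on time alone (so conditioning on deaths and trait trajectories leaves the speciation Poisson process untouched); this is the same place where the paper invokes that, conditionally on the chain $(x_j)$, the ancestors speciate independently on their respective intervals $I_j$, and it is where asymmetry and the class-1 restriction do all the work.
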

 The quantity $q(t)$ involved in the expression of $F$ is not directly available from the model parameters (but see Appendix, Section ``Generator and Feynman-Kac formulae''). In the next statement, we give a characterization of $F$ in terms of quantities which are more easily computable from the model parameters. The proof only requires a few lines and is found in the Appendix.
\begin{prop}
\label{prop:NEW}
For any $s\ge t$, let $g(t,s)$ be the density at time $s$ of the extinction time of a species born at time $t$.  Then $F$ is the unique solution to the following integro-differential equation
\begin{equation}
\label{eqn:hidden convolution}
F'(t) = \lambda(T-t)\,\left( F(t) - \int_0^t ds\ F(s)\,g(T-t,T-s)\right)\qquad t\ge 0,
\end{equation}
with initial condition $F(0)=1$.
\end{prop}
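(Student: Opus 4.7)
The strategy is to differentiate the explicit formula for $F$ provided by Theorem~\ref{ThmMain} and recast the integro-differential equation as an equivalent identity involving $q$, which we then prove by a first-step decomposition. Concretely, from
\[
F(t)=\exp\!\left(\int_{T-t}^T \lambda(s)(1-q(s))\,ds\right)
\]
one reads off $F(0)=1$ and, by direct differentiation,
\[
F'(t)=F(t)\,\lambda(T-t)\,(1-q(T-t)).
\]
Plugging this into \eqref{eqn:hidden convolution}, the equation reduces (after cancelling $\lambda(T-t)$, which we may do freely as a pointwise identity in $t$) to the single key identity
\[
F(t)\,q(T-t)\;=\;\int_0^t F(s)\,g(T-t,T-s)\,ds.
\]

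The main work is the proof of this identity. Setting $\tau=T-t$, I would derive a self-consistency equation for $q(\tau)$ by conditioning on the extinction time $u\in(\tau,T)$ of the focal species born at $\tau$. Because speciation is asymmetric and the model belongs to class~1, conditionally on the lifetime $(\tau,u)$ of the focal species its daughter lineages are born according to an inhomogeneous Poisson process of rate $\lambda(\cdot)$ on $(\tau,u)$, and each daughter born at time $v$ initiates an independent subclade having probability $q(v)$ of leaving no descendants at time $T$. The exponential formula for Poisson functionals gives
\[
\mathbb{P}\!\left(\text{focal species has no surviving descendants at }T \,\middle|\, \text{extinction at }u\right)=\exp\!\left(-\int_{\tau}^{u}\lambda(s)(1-q(s))\,ds\right),
\]
and integrating against the extinction-time density $g(\tau,u)$ yields
\[
q(\tau)=\int_{\tau}^{T} g(\tau,u)\,\exp\!\left(-\int_{\tau}^{u}\lambda(s)(1-q(s))\,ds\right)du.
\]

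To finish, I would recognize the inner exponential as $F(T-u)/F(T-\tau)$ via Theorem~\ref{ThmMain}, multiply both sides by $F(T-\tau)$, and make the substitution $s=T-u$, arriving at
\[
F(T-\tau)\,q(\tau)=\int_{0}^{T-\tau} F(s)\,g(\tau,T-s)\,ds,
\]
which is the required identity after relabelling $t=T-\tau$. Uniqueness of the solution to \eqref{eqn:hidden convolution} with $F(0)=1$ then follows from standard Volterra-type arguments: on any bounded interval the right-hand side defines a Lipschitz functional of $F$, so a Gr\"onwall estimate forces two solutions to coincide.

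The main obstacle is the conditioning step that yields the equation for $q$: it uses crucially (i) the asymmetric-speciation assumption, so that the ``mother'' lineage survives a speciation event and her daughters seed genuinely independent subtrees with the same law as the original process started at their birth times, and (ii) the non-heritability of the trait together with the fact that $\lambda$ does not depend on the trait, so that the rate of daughter births depends only on time. Without these ingredients the Poisson/exponential computation fails, which is consistent with the earlier counterexamples showing that model classes 2--5 do not yield a CPP.
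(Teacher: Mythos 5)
Your proof is correct and takes essentially the same route as the paper's: the paper likewise rests on the self-consistency equation $q(t)=\int_t^T ds\, g(t,s)\, e^{-\int_t^s du\,\lambda(u)(1-q(u))}$ obtained by conditioning on the focal species' extinction time, rewrites the exponential as $F(T-s)/F(T-t)$ via Theorem~\ref{ThmMain}, and combines this with $F'(t)=\lambda(T-t)(1-q(T-t))F(t)$. The only differences are presentational: you run the reduction in reverse (verify the key identity rather than derive the equation for $q$ first), spell out the Poisson conditioning step that the paper dismisses as ``easy to see'', and add an explicit Gr\"onwall uniqueness argument that the paper leaves implicit.
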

Recall that for a CPP with inverse tail distribution $F$, the reconstructed tree of sampled  tips, when tips  are sampled independently with probability $p$, is a coalescent point process with node depths distributed as $H_p$ and inverse tail distribution $F_p$ given by $F_p= 1-p+pF$, where $F$ can be computed thanks to one of the previous two statements.  Also recall (or check) that the common density of node depths, or coalescent density, is $F'/F^2$ (or $F_p'/F_p^2$ in case of sampling) so that the knowledge of $F$ yields instantaneously the likelihood of a reconstructed tree produced by a macro-evolutionary model belonging to class 1, thanks to the results in Section``The likelihood of a reconstructed tree''.

This proposition has the very important following consequence. From the knowledge of $g$, the pair $(F,F')$ can be computed by (possibly numerical) integration of \eqref{eqn:hidden convolution}, and the coalescent density is then obtained as $F'/F^2$. This represents an important advance, because $g$ can be made available in terms of the model parameters much more easily than $q$, as we now see.

Let $X$ denote the stochastic process which describes the dynamics of the trait in a single species. 
Invoking arguments from Section ``Precisions about rates'' the density at time $s$ of the extinction time of a species born at time $t$ with trait value $x$, conditional on the trait dynamics ($X_t =x$ in particular), equals
$$
\mu(s, X_{s})\ e^{-\int_t^s dr\, \mu(r,X_{r})}.
$$
If $\EE_{t,x}$ denote the expectation associated to the distribution of $X$ started at time $t$ in state $x$, then  
\begin{equation}
\label{eqn:hts}
g(t,s) = \int_\RR \nu_t (dx)\ u_s(t,x) \qquad s\ge t,
\end{equation}
where
\begin{equation}
\label{eqn:ustx}
u_s(t,x):=\EE_{t,x} \left(\mu(s, X_{s})\ e^{-\int_t^s dr\, \mu(r,X_{r})}\right)\qquad s\ge t.
\end{equation}
Now assume that $X$ is a Markov process. Then in general, $u_s$ (and so $g$) can be computed thanks to the Feynman-Kac formula, which ensures that $u_s$ is the unique solution to 
\begin{equation}
\label{eqn:Feynman-Kac}
\frac{\partial u_s}{\partial t} (t,x) + L_t u_s (t,x) =   \mu(t,x)\ u_s(t,x),
\end{equation}
with terminal condition $u_s(s,x)= \mu(s,x)$, and where $L_t$ is the generator at time $t$ of $X$ (see Appendix, Section ``Generator and Feynman-Kac formulae''). Specifically, when $X$ is the age, the initial trait value is $x=0$ and the age at $s$ of a species born at $t$ is $X_s= s-t$ so that 
\begin{equation}
\label{eqn:FK age}
g(t,s) = \mu(s, s-t)\ e^{-\int_t^s dr\, \mu(r,r-t)} \qquad s\ge t.
\end{equation}
 In the following section, we display some special cases of biological interest leading to at least partially explicit expressions for $g$ and for $F$.

\section*{\textsc{Three special cases}}

In this section, we study three special cases of macroevolutionary models in class 1. The reconstructed trees under each of these models are  coalescent point processes as shown in the previous section. We study these coalescent point processes, and provide means of computing their associated coalescent density. The three cases are: trait-independent models (Markovian case), time-independent models and mass extinction events. 

\subsection*{\textit{The time-dependent models, without trait dependency}}

Here, we assume that speciation and extinction rates may depend on time but do not depend on a trait. Thus, we denote by $\lambda(t)$ the speciation rate at time $t$ and by $\mu(t)$ the extinction rate at time $t$. We also define 
$$
r(t)=\lambda(t) - \mu(t)
$$
sometimes referred to as the \emph{time-dependent diversification rate}.
Using Proposition \ref{prop:NEW},  we get the following statement by a few lines of calculations  which are put to the Appendix.




\begin{prop}
\label{prop:markov}
In the case when the  rates $\lambda$ and $\mu$ only depend on time, the reconstructed tree is a CPP whose inverse tail distribution $F$ is given by
$$
F(t) =1+\int_{T-t}^T ds\,\lambda(s)\,e^{\int_s^T du \,r(u)} .
$$
\end{prop}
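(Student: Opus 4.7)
The plan is to apply Proposition \ref{prop:NEW} and verify directly that the displayed $F$ solves the integro-differential equation together with the initial condition $F(0)=1$; by the uniqueness built into that proposition, the identification will follow. The only place where the trait-free setting enters is the formula for $g$: since the extinction rate depends on time alone, the lifetime of a species born at $t$ is inhomogeneous exponential, so
$$
g(t,s) \;=\; \mu(s)\,\exp\!\left(-\int_t^s \mu(r)\,dr\right), \qquad s\ge t,
$$
which is what \eqref{eqn:hts}--\eqref{eqn:ustx} yield when $\mu$ does not depend on the trait. To avoid working with reversed integrals I would set $a=T-t$ and $G(a):=F(T-a)$. The substitution $u=T-s$ in the convolution of Proposition \ref{prop:NEW} turns the equation into
$$
-G'(a) \;=\; \lambda(a)\!\left(G(a) \;-\; \int_a^T G(u)\,\mu(u)\,e^{-\int_a^u \mu(r)\,dr}\,du\right), \qquad G(T)=1.
$$
The ansatz reads $G(a)=1+\int_a^T \lambda(s)\,e^{\int_s^T r(u)\,du}\,ds$, so $G(T)=1$ is immediate and $G'(a)=-\lambda(a)\,e^{\int_a^T r(u)\,du}$. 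The whole proposition therefore reduces to the identity
$$
G(a) \;-\; \int_a^T G(u)\,\mu(u)\,e^{-\int_a^u \mu(r)\,dr}\,du \;=\; e^{\int_a^T r(u)\,du}.
$$

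To check this identity I would substitute the explicit $G$ into the integral, exchange the order of integration by Fubini, and apply the two elementary antiderivatives $\int_a^s \mu(u)\,e^{-\int_a^u \mu(r)\,dr}\,du = 1-e^{-\int_a^s \mu(r)\,dr}$ and $\int_a^T \lambda(s)\,e^{\int_s^T \lambda(u)\,du}\,ds = e^{\int_a^T \lambda(u)\,du}-1$. The constant $1$ in $G$ absorbs the pure-$\mu$ contributions produced by the first identity, and the remaining pieces collapse via $\int_s^T r-\int_a^s \mu = \int_s^T \lambda - \int_a^T \mu$ to $e^{-\int_a^T \mu}\cdot e^{\int_a^T \lambda}=e^{\int_a^T r}$, which is exactly what is needed.

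The main obstacle is bookkeeping rather than anything conceptual: keeping the three nested exponentials, the Fubini swap, and the time reversal $a=T-t$ straight. As a sanity check one may bypass Proposition \ref{prop:NEW} and work directly from Theorem \ref{ThmMain}: the extinction probability satisfies the Riccati equation $q'=(1-q)(\lambda q-\mu)$ with $q(T)=0$, which linearises under $v=1/(1-q)$ to $v'=rv-\lambda$, $v(T)=1$; since $\lambda/v=r-(\log v)'$, one obtains $F(t)=e^{\int_{T-t}^T r(u)\,du}\,v(T-t)$, and expanding $v$ recovers the claimed formula. I would relegate this second derivation to a remark and present the verification via Proposition \ref{prop:NEW} as the main argument, consistent with the paper's announcement.
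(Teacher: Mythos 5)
Your proposal is correct, and both of your routes check out. The paper also proceeds through Proposition \ref{prop:NEW} with the same trait-free density $g(t,s)=\mu(s)\,e^{-\int_t^s\mu(r)\,dr}$, but it \emph{derives} the formula rather than verifying it: after an integration by parts on the convolution term it sets $G(t)=F'(t)\,e^{\int_0^t\tilde\mu(u)\,du}$ (with $\tilde\lambda(t)=\lambda(T-t)$, $\tilde\mu(t)=\mu(T-t)$), which collapses the integro-differential equation to the linear Volterra identity $G(t)=\tilde\lambda(t)\bigl(1+\int_0^t G(s)\,ds\bigr)$, solved by $G(t)=\tilde\lambda(t)\,e^{\int_0^t\tilde\lambda(u)\,du}$; integrating $F'$ and changing variables gives the stated $F$. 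Your main argument is the verification twin of this: you substitute the ansatz and discharge the identity by Fubini and the two elementary antiderivatives, leaning on the uniqueness asserted in Proposition \ref{prop:NEW}, which is perfectly legitimate; your bookkeeping is right, since after the Fubini swap the left-hand side reduces to $e^{-\int_a^T\mu}+e^{-\int_a^T\mu}\bigl(e^{\int_a^T\lambda}-1\bigr)=e^{\int_a^T r}$ exactly as you indicate via the exponent collapse $\int_s^T r-\int_a^s\mu=\int_s^T\lambda-\int_a^T\mu$. What the paper's route buys is that no candidate needs to be guessed --- the substitution manufactures the solution --- whereas your verification is shorter once the formula is in hand. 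Your secondary route is genuinely different from anything in the paper: it bypasses Proposition \ref{prop:NEW} entirely, starting from Theorem \ref{ThmMain} and the classical Kendall backward equation $q'=(1-q)(\lambda q-\mu)$ with $q(T)=0$, linearized by $v=1/(1-q)$ to $v'=rv-\lambda$, $v(T)=1$; the observation $\lambda/v=r-(\log v)'$ then turns the exponential in Theorem \ref{ThmMain} directly into $F(t)=e^{\int_{T-t}^T r(u)\,du}\,v(T-t)$, and solving the linear ODE recovers the claimed formula (I checked the algebra; it is sound). This derivation is self-contained and ties the CPP inverse tail distribution to standard birth--death theory, at the modest cost of establishing the Riccati equation for $q$ --- unproblematic in the trait-free case, and precisely the path the paper's Appendix explains it avoided because the analogous characterization of $q$ becomes awkward when the extinction rate depends on a trait.
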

Recall that the inverse tail distribution of the incomplete phylogeny with sampling probability $p$ is $F_p = 1-p+pF$, so that
$$
F_p(t) =1+p\int_{T-t}^T ds\,\lambda(s)\,e^{\int_s^T d u\,r(u)} .
$$
When rates do not depend on time, the diversification process is a linear birth--death process with birth rate $\lambda$ and death rate $\mu$. The last formula then boils down to
\begin{equation}
\label{eqn:Markovian scale}
F_p(t)=\begin{cases}
1 + \frac{p\lambda}{r}\big(e^{rt}- 1\big) & \text{if } r\not=0 \\
1+p\lambda t & \text{if }r=0.
\end{cases}
\end{equation}
Let us now check that we can recover the likelihood formulae for the reconstructed tree of the birth--death model computed by different means in the earlier works \citep{Stadler2009JTB, Stadler2010JTB,hallinan2012generalized}.
Recall that the coalescent density $f_p$ is given by $f_p=F_p'/F_p^2$, so in the case $r\neq 0$, we further obtain
$$
f_p(t)=
 \frac{p \lambda r^2  e^{-rt}}{(p \lambda +(r-p\lambda)e^{-rt })^2   }
$$
and
$$
P(H_p<\timestop) = 1-\frac{1}{F_p(\timestop)} = \frac{p \lambda (1- e^{-r\timestop })}{p \lambda +(r-p\lambda)e^{-r\timestop } }.
$$
Plugging these expressions into the likelihood formulae \eqref{eqn:likelihood2} and \eqref{eqn:likelihood1} respectively yields
%
%
%
\citep{Stadler2009JTB}, Equation (2), and
%
\citep{Stadler2010JTB} Corollary 3.7 (with $\psi=0,m=0,k=0$).

\subsection*{\textit{Deterministic non-heritable trait dynamics without time-dependence}}

\subsubsection*{Main result on age-dependent extinction rates}

Here, we assume that (a) rates are not time-dependent, (b) the dynamics of the trait is deterministic, and (c) the initial value of the trait of a new species is also deterministic. Because of assumption (a), we can denote by $\lambda$ the speciation rate and by $\mu(x)$ the speciation rate of a species carrying trait value $x$. Because of assumptions (b) and (c), the trait of a species is a deterministic function, say $\phi$, of its age $a$, so that the death rate of a species of age $a$ is $\mu(\phi(a))$. It is then equivalent to assume that the death rate is a function $\tilde{\mu} = \mu\circ\phi$ of the species age. Sticking to the notation $\mu$ instead of $\tilde{\mu}$, we get that $g(t,s) = g(t-s)$, where $g$ is now the density of the lifetime of a species. More specifically, let $L$ denote the lifetime of a species, i.e., the age at which a species becomes extinct. 
Invoking arguments from Section ``Precisions about rates'', the probability of becoming extinct before age $a$ is
$$
P(L<a) = \int_0^a g(s)\, ds = 1-e^{-\int_0^a ds\, \mu(s)} ,
$$
so by differentiating,
\begin{equation}
\label{eqn:density from rate}
g(a)=\mu(a)\ e^{-\int_0^a ds\, \mu(s)} \qquad a\ge 0.
\end{equation}
Note that the integral $\int_0^\infty g(a)\, da$ is equal to the probability  $1-e^{-\int_0^\infty ds\, \mu(s)}$ that $L$ is finite, and so can be strictly smaller than 1. Conversely, if a species lifetime has density $g$, then the extinction rate is the following function of age
\begin{equation}
\label{eqn:rate from density}
\mu(a) = \frac{g(a)}{1-\int_0^a g(s)\, ds} \qquad a\ge 0,
\end{equation}
which is constant only if the density  $g$ of the species lifetime $L$ is exponential.
\begin{prop} 
\label{PropnMh}
In the case when the extinction rate $\mu$ is a function of age, but both $\lambda$ and $\mu$ are time-independent, the reconstructed tree is a CPP whose inverse tail distribution $F$ is the unique solution to 
\begin{equation}
\label{eqn:convolution}
F'(t) = \lambda\,\left( F(t) - F\star g(t)\right),
\end{equation}
with $F(0)=1$, where $g$ is the density of species lifetimes, specified by \eqref{eqn:density from rate}, and $\star$ denotes the convolution product. Equivalently, $F$ is the unique non-negative function with Laplace transform 
\begin{equation}
\label{eqn : LT scale}
\int_0^\infty F(t) \,e^{-tx}\, dt = \frac{1}{\psi(x)},
\end{equation}
where 
$$
\psi(x) = x-\lambda+ \lambda \int_0^\infty g(t)\,e^{-tx}\,dt\qquad x\ge 0.
$$
\end{prop}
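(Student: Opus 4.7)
The plan is to deduce this proposition as a direct specialization of Proposition \ref{prop:NEW} to the present time-homogeneous setting, and then to pass to the Laplace transform.

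First I would verify that, under the assumptions (a)--(c), the density $g(t,s)$ of the extinction time of a species born at time $t$, as introduced in Proposition \ref{prop:NEW}, depends only on the elapsed time $s-t$ and equals the lifetime density $g(\cdot)$ given by \eqref{eqn:density from rate}. Since the trait dynamics $\phi$ is deterministic with deterministic initial value, formula \eqref{eqn:ustx} reduces to $u_s(t,0)=\mu(\phi(s-t))\,\exp(-\int_t^s \mu(\phi(r-t))\,dr)$, which after the change of variable $a=r-t$ becomes exactly $g(s-t)$ in the sense of \eqref{eqn:density from rate}. Plugging $\lambda(T-t)=\lambda$ and $g(T-t,T-s)=g(t-s)$ into \eqref{eqn:hidden convolution} turns the right-hand side into $\lambda\bigl(F(t)-\int_0^t F(s)g(t-s)\,ds\bigr)=\lambda\bigl(F(t)-F\star g(t)\bigr)$, which is the equation \eqref{eqn:convolution}; uniqueness (with $F(0)=1$) follows from standard theory of Volterra equations of the second kind with bounded kernel $g$.

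Next I would take Laplace transforms of \eqref{eqn:convolution}. Writing $\hat F(x)=\int_0^\infty F(t)\,e^{-tx}\,dt$ and $\hat g(x)=\int_0^\infty g(t)\,e^{-tx}\,dt$, standard properties of the Laplace transform give $\widehat{F'}(x)=x\hat F(x)-F(0)=x\hat F(x)-1$ and $\widehat{F\star g}(x)=\hat F(x)\hat g(x)$. Substituting yields $x\hat F(x)-1=\lambda\hat F(x)-\lambda\hat F(x)\hat g(x)$, i.e.
\[
\hat F(x)\bigl(x-\lambda+\lambda\hat g(x)\bigr)=1,
\]
which is exactly $\hat F(x)=1/\psi(x)$, as claimed in \eqref{eqn : LT scale}. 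Uniqueness of the non-negative $F$ with this Laplace transform is the usual Lerch/Laplace inversion uniqueness.

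The only delicate point will be to ensure that $\hat F(x)$ actually exists for $x$ in some right half-line, so that the Laplace computation is legitimate; I would address this by arguing that $F(t)$ grows at most exponentially. This can be seen either directly from the probabilistic interpretation $F(t)=1/P(H>t)$ combined with Proposition \ref{prop:number lineages}, which links $F$ to the mean number of lineages in a supercritical-or-subcritical branching process (hence exponential growth in $t$), or from a Gronwall-type estimate applied to \eqref{eqn:convolution} using $\|g\|_1\le 1$. Once $\hat F$ is finite on some interval $(x_0,\infty)$, the algebraic identity above pins it down uniquely, and Laplace inversion identifies $F$.
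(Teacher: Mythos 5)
Your proposal is correct and follows essentially the same route as the paper's own proof: specialize Proposition \ref{prop:NEW} using $g(t,s)=g(s-t)$ to obtain the convolution equation \eqref{eqn:convolution}, then take Laplace transforms (the paper computes $\mathscr{L}(F')(x)=-1+x\mathscr{L}(F)(x)$ by integration by parts, valid for $x$ beyond the exponential growth parameter of $F$, exactly the point you address via your Gronwall-type bound). Your added verifications --- that the deterministic trait dynamics reduce $g(t,s)$ to the lifetime density of \eqref{eqn:density from rate}, and that $\hat F$ exists on a right half-line --- are details the paper handles in the surrounding text and in one clause of its proof, so nothing of substance differs.
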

The proof of this proposition can be found in the Appendix. Note that the inversion of Laplace transforms can sometimes be numerically unstable, and it can then be preferrable to use the convolution equation \eqref{eqn:convolution} to obtain numerical evaluations of $F$.

Proposition \ref{PropnMh} is proved by other means in \citep{Lambert2010}, where the contour process of the phylogenetic tree is defined. This process starts at the extinction time of the progenitor species, decreases linearly, and makes jumps at each encounter of a speciation event, whose size is the lifetime of the incipient species. 
The contour process of the tree truncated at $T$ is a L{\'e}vy process reflected below $T$ and killed upon hitting 0. In the jargon of stochastic processes, the function $\psi$ is called the Laplace exponent of this L{\'e}vy process and $F$ is called its scale function. More information (on dead branches) than that on the reconstructed tree can be retrieved from the knowledge of this contour process, but we will not develop this point here. 

We will now discuss some special cases of $\mu(x)$.

\subsubsection*{Deterministic lifetimes}
\label{subsec:deterministic}
The result of the previous section holds even when species lifetimes do not have a density. As an example, we now treat the case of a fixed species lifetime equal to $b$ with probability 1. This amounts to replacing the distribution $g(a)\, da$ by $\delta_b(da)$, i.e., the Dirac measure at $b$. Thanks to \eqref{eqn : LT scale}, $F$ is then the unique non-negative function whose Laplace transform equals
$$
\int_0^\infty F(t) \,e^{-tx}\, dt = \left( x-\lambda+ \lambda e^{-bx}
\right)^{-1},
$$ 
or equivalently thanks to \eqref{eqn:convolution}, it is the unique solution to $F(0) =1$ and
$$
F'(t) = \lambda\,\left( F(t) - F(t-b)\right).
$$
In particular $F$ has a continuous derivative (except at $b$) and can be computed as follows. For any integer $n$, for any $t\in[nb, (n+1)b]$
$$
F(t) = P_n(\lambda e^{-\lambda b}(t-nb))\, e^{\lambda t},  
$$
where $P_n$ is a polynomial of degree $n$ solving the recurrence relationship 
$$
P_{n+1}(t) = P_n (B) - \int_0^t P_n(s)\, ds\qquad t\in [0,B],
$$
with $B=b\lambda e^{-b\lambda}$, and initial condition $P_0 \equiv 1$. The polynomials $P_n$ can be evaluated in a straightforward manner by any software of symbolic calculus (e.g., Mathematica). For our purpose, it is even sufficient to compute $P_n$ for the integers $n$ such that $nb \le T$ (since we only require $F(t)$ for $t \leq T$).

\subsubsection*{Exponentially distributed lifetimes}

In the case when the species lifetimes are exponentially distributed with parameter $\mu$ (i.e. $\mu$ does not depend on age of the species), the diversification process is a linear birth--death process with birth rate $\lambda$ and death rate $\mu$, and we should recover the expression given by Equation \eqref{eqn:Markovian scale}. Indeed, it is easy to obtain $\psi(x)=x(x-r)/(x+\mu)$, where $\psi$ is defined in Proposition \ref{PropnMh} and $r=\lambda-\mu$ is the net diversification rate. It is then straightforward to invert the Laplace transform in \eqref{eqn : LT scale}, which yields Equation \eqref{eqn:Markovian scale}, as expected.

\subsubsection*{Gamma distributed lifetimes}
Here, we assume that species lifetimes follow a Gamma distribution with shape parameter 2, i.e., their probability density is $g(a) = \theta^2 a\, e^{-\theta a}$. The parameter $\theta$ is not an extinction rate any longer, since from \eqref{eqn:rate from density}, the age-dependent extinction rate is given by
$$
\mu(a) = \frac{g(a)}{\int_a^\infty g(s)\, ds} = \frac{\theta^2 a}{1+\theta a} ,
$$
which increases from 0 for small ages to $\theta$ for large ages. 



It is straightforward to compute the function $\psi$ defined in Proposition \ref{PropnMh}
$$
\psi(x) = \frac{x\, Q(x)}{\left(x+\theta\right)^2} \qquad x\ge 0,
$$
where
$$
Q(x) = x^2 +(2\theta -\lambda) x +\theta(\theta-2\lambda).
$$
Provided that $\theta\not=2\lambda$ (and that both parameters are nonzero), $Q(x) = (x-x_1)(x-x_2)$, where $x_1<x_2$ are both nonzero, and given by
$$
x_1 = \frac{\lambda - 2\theta  -\sqrt{\Delta}}{2}\quad \mbox{ and }\quad x_2 = \frac{\lambda - 2\theta  +\sqrt{\Delta}}{2},
$$
with $\Delta = \lambda^2 + 4\lambda \theta$. Then $1/\psi$ can be decomposed as follows
$$
\frac{1}{\psi(x)} = \frac{\alpha}{x-x_1} + \frac{\beta}{x-x_2} + \frac{\gamma}{x},
$$
where
$$
\alpha = -\frac{(\lambda-\sqrt{\Delta})^2}{4x_1 \sqrt{\Delta}}, \qquad
\beta = \frac{(\lambda+\sqrt{\Delta})^2}{4x_2 \sqrt{\Delta}},\qquad
\gamma = \frac{\theta}{\theta-2\lambda}.
$$
Note that there are the following alternative formulae for $\alpha$ and $\beta$
$$
\alpha = \frac{\lambda(\lambda+\theta-\sqrt{\Delta})}{(\theta-2\lambda)\sqrt{\Delta}} \quad\mbox{ and }\quad
\beta = -\frac{\lambda(\lambda+\theta+\sqrt{\Delta})}{(\theta-2\lambda)\sqrt{\Delta}} .
$$
It is then elementary to invert \eqref{eqn : LT scale} to obtain $F$, and thanks to $F_p = 1-p + p F$,
$$
F_p(t) = 1-p + \gamma p + p\alpha \,e^{tx_1} + p\beta \,e^{tx_2}.
$$
Since $f_p = F_p'/F^2$, we get
$$
f_p(t) = \frac{p x_1 \alpha \,e^{tx_1} + p\beta x_2\,e^{tx_2}}{\left(1-p + \gamma p + p\alpha \,e^{tx_1} + p\beta \,e^{tx_2}\right)^2} .
$$

\subsection*{\textit{Mass extinction events}}

We again start with a phylogenetic tree running between times $0$ and $\timestop$ and add extra extinctions at fixed times $\timestop-s_k<\ldots<\timestop-s_1$ by assuming that each  lineage is independently terminated (together with its subsequent descendance) at time $\timestop-s_i$ with the same fixed probability $1-\varepsilon_i$, as for so-called bottlenecks in population genetics. For example, a single lineage starting at time $0$ and ending up at time $\timestop$ survives the $k$ mass extinction events and makes it to time $\timestop $ with probability $\prod_{i=1}^k \varepsilon_i$. Bernoulli sampling with probability $p$ can be seen as a special case of bottleneck at time $\timestop $, with $s_0=0$ and  (survival probability) $\varepsilon_0=p$. 

Notice that the effect of mass extinctions on the phylogeny of contemporaneous species is the same on the initial phylogeny as on the smaller tree which is the reconstructed tree obtained in the absence of mass extinctions. Then 
instead of working with the forward-in-time diversification process, we can as well work with the associated coalescent point process obtained before the passage of bottlenecks. Therefore, the following proposition can be applied to any class 1 model of diversification, provided the function $F$ is chosen to be the inverse tail distribution of the associated CPP reconstructed tree. The following proposition states that the addition of mass exitnctions preserves the CPP property of the reconstructed tree and displays a characterization of its coalescent distribution. It is proved in the Appendix.

\begin{prop}
\label{prop:bottlenecks}
Start with a CPP tree with inverse tail distribution $F$. Add extra mass extinctions with survival probabilities $\varepsilon_1,\ldots,\varepsilon_k$ at times $\timestop-s_1>\ldots>\timestop-s_k$ (where $s_1>0$ and $s_k<\timestop$). Then conditional on survival, the reconstructed tree of the phylogenetic tree obtained after the passage of mass extinctions is again a coalescent point process with inverse tail distribution $F_\varepsilon$ given by
\begin{equation}
\label{eqn:bottlenecks}
F_\varepsilon (t)=
\varepsilon_1\cdots\varepsilon_m\,F(t)+\sum_{j=1}^m (1-\varepsilon_j)\,\varepsilon_1\cdots\varepsilon_{j-1}\,F(s_j)\qquad t\in[s_m, s_{m+1}], m\in\{0,1,\ldots, k\},
\end{equation}
where $s_0:=0$ and $s_{k+1}=\timestop$ (empty sum is zero, empty product is 1).
\end{prop}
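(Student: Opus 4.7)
My plan is to proceed by induction on the number $k$ of mass extinctions, with the base case $k=1$ carrying all the substantive work. Since survival tests at different depths $s_j$ concern disjoint collections of lineages and are Bernoulli-independent, mass extinctions at distinct depths commute as operations on the tree; consequently, assuming the base case holds for any CPP input, I may apply the mass extinctions sequentially, shallowest first. For the inductive step, starting from the CPP obtained after the first $k-1$ extinctions with inverse tail $F^{(k-1)}$ given by the claimed piecewise formula, applying the base case at depth $s_k$ with survival $\varepsilon_k$ produces $F^{(k)}$ equal to $F^{(k-1)}$ below $s_k$ and $\varepsilon_k F^{(k-1)} + (1-\varepsilon_k) F^{(k-1)}(s_k)$ above $s_k$; substituting the inductive formula for $F^{(k-1)}$ and simplifying recovers precisely the claimed expression for $F^{(k)}$.

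For the base case, I decompose the CPP at depth $s_1$ into \emph{blocks}, each a maximal run of consecutive tips whose pairwise coalescences lie below $s_1$; blocks are separated by the indices $i$ with $H_i > s_1$. Since the $H_i$ are iid, block sizes are iid geometric with parameter $q = 1/F(s_1)$, within-block coalescences are iid with law $H \mid H \le s_1$, and between-block coalescences are iid with law $H \mid H > s_1$, the three streams being mutually independent. The mass extinction kills each block independently with probability $1 - \varepsilon_1$, so the super-tree formed by block boundaries is itself a CPP on which Proposition~\ref{prop:sampling} applies with $p = \varepsilon_1$, giving that the coalescence $H''$ between two consecutive surviving blocks satisfies $P(H'' > t) = F(s_1)/(\varepsilon_1 F(t) + (1-\varepsilon_1) F(s_1))$ for $t > s_1$.

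To finish the base case I verify iid-ness of the pruned node depths and compute the tail of the typical new coalescence $H'$. By the memoryless property of the geometric block sizes, starting from any surviving tip the probability that it is the last tip of its surviving block is $q$, independently of the past; combined with the mutual independence of the within-block, between-block, and block-survival streams, this shows that the pruned coalescence sequence is iid, with $H'$ distributed as the mixture ``$H \mid H \le s_1$ with probability $1-q$, and $H''$ with probability $q$''. A short case split then yields $P(H'>t) = 1/F(t)$ for $t \le s_1$ and $P(H'>t) = q \cdot P(H''>t) = 1/(\varepsilon_1 F(t) + (1-\varepsilon_1) F(s_1))$ for $t > s_1$, which is the $k=1$ instance of the claimed formula. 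The main obstacle is this iid verification, since the block structure a priori couples consecutive same-versus-different-block choices; its resolution hinges on the memoryless property of the geometric block size together with the independence of the three streams feeding into the pruned tree.
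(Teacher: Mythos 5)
Your proposal is correct and takes essentially the same route as the paper's own proof: the paper likewise reduces to a single bottleneck, decomposes the CPP at depth $s_1$ into blocks of consecutive tips separated by branches deeper than $s_1$, and identifies the new node depth as a mixture of $H$ conditioned on $H\le s_1$ and a geometric($\varepsilon_1$) maximum of independent copies of $H$ conditioned on $H> s_1$ --- precisely the computation of Proposition \ref{prop:sampling}, which you invoke via your super-CPP of blocks --- before handling $k$ bottlenecks by iterated thinning, just as your induction does. Your explicit memorylessness argument for the iid property and the commutativity remark only make explicit what the paper leaves implicit in its regeneration argument and its closing ``iterating this procedure'' step.
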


This formula can also include sampling by adding a bottleneck with $s_0 =0$ and $\varepsilon_0=p$, resulting in 
$$
F_\varepsilon (t)=
\varepsilon_0\cdots\varepsilon_m\,F(t)+\sum_{j=0}^m (1-\varepsilon_j)\,\varepsilon_0\cdots\varepsilon_{j-1}\,F(s_j)\qquad t\in[s_m, s_{m+1}], m\in\{0,1,\ldots, k\},
$$
which boils down to $F_\varepsilon=1-\varepsilon_0+ \varepsilon_0 F$ when $k=0$ (since $F(0)=1$), as expected from Proposition \ref{prop:sampling}.

\section*{\textsc{What about Kingman coalescent?}}


Another popular way of randomly constructing rooted, binary trees with edge lengths is the Kingman coalescent \citep{Kin82}. For a tree with a finite number $n$ of tips, the model can be described as follows. Start with $n$ labelled lineages and let time run backwards, from tips to root. In the first step, after a random exponential duration with parameter $n(n-1)/2$, one pair of lineages is merged, uniformly chosen among the $n(n-1)/2$ unordered pairs of labelled lineages. This procedure is repeated recursively until the $(n-1)$-st step where the two remaining lineages are merged, after an exponential duration with parameter $1$. Then the probability of any labelled ranked tree shape under this model is 
$$
\prod_{k=2}^n \frac{2}{k(k-1)} = \frac{2^{n-1}}{n!(n-1)!} 
$$
i.e., the Kingman coalescent tree shape is URT (uniform on ranked labelled trees). 

One could wonder if the Kingman coalescent can be built via the CPP procedure. First observe that the node depths of the Kingman coalescent can be arbitrarily large, so it is impossible to equate its law with that of a CPP with stem age $T$, whose node depths are all smaller than $T$. But then to go round this problem, we could try to set $T=+\infty$ or to randomize $T$, in order to allow for node depths of arbitrary length in the CPP. In other words, for each fixed $n$ we ask the following question $Q_n$:
    \begin{myindentpar}{1cm}
 Question $Q_n$: ``Are there random variables $T_n$ and $A_n$ on $(0,+\infty]$ such that the node depths of the $n$-Kingman coalescent can be obtained by first drawing a realization $T$ of $T_n$ and then $n-1$ independent copies of $A_n$ conditioned to be smaller than $T$ ?''
    \end{myindentpar}
 As usual, we denote by $H_1, H_2,\ldots, H_{n-1}$ the node depths of the CPP with $n$ tips.
Mathematically, we ask if $P_n^{\text{cpp}} = P_n^{\tt K}$, where  $P_n^{\tt K}$ is the law of the Kingman coalescent with $n$ tips and $P_n^{\text{cpp}}$ is the law of the randomized CPP conditioned on $n$ tips, that is
$$
P_n^{\text{cpp}}(H_1 \in dx_1,\ldots, H_{n-1}\in dx_{n-1}) = \int_{(0,+\infty]} \PP(T_n \in dT) \,\prod_{i=1}^{n-1} \PP(A_n \in dx_i\mid A_n <T) 
$$
The distribution $P_n^{\text{cpp}}$ has been considered in Aldous \& Popovic (2005) \citep{AlPo2005}, in the special case when 
\begin{itemize}
\item
$A_n$ has the law of node depths in the reconstructed tree of the critical birth--death process with (constant) rates (both) equal to $b_n$;
\item
$T_n$ is given the (improper) density equal to 1 everywhere, further conditioned on this  birth--death process (started at 0 and stopped at $T_n$) to have $n$ tips (which makes it a proper random variable).
\end{itemize}
In other words, here $P_n^{\text{cpp}}$ is the law of the reconstructed tree of a critical birth--death process with a `uniform' prior on its stem age, further conditioned to have $n$ tips.
In particular in this case, $\PP(T_n\in dT)/dT = n\,b_n^n\,T^{n-1}/ (1+b_nT)^{n+1}$ and $\PP(A_n>T) =  1/(1+b_nT)$. The second author of the present paper has shown in \citep{Gernhard2008BMB} that when $b_n = n/2$, the \emph{expectations} of node depths under $P_n^{\tt K}$ are equal to the \emph{expectations} of node depths under $P_n^{\text{cpp}}$. However, she also proved that these two probabilities are \emph{not} equal, so that the answer to $Q_n$ is `no' for this special case of randomized CPP, despite the equality between expectations of node depths.

\begin{prop} The answer to $Q_2$ is `yes', but the answer to $Q_n$ cannot be `yes' for infinitely many $n$.
\end{prop}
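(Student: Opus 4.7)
The answer to $Q_2$ is yes by direct construction. I would take $T_2 = +\infty$ deterministically and $A_2 \sim \mathrm{Exp}(1)$: since $A_2 < T_2$ holds almost surely, the single node depth $H_1$ of the randomized CPP is distributed as $\mathrm{Exp}(1)$, which matches the Kingman coalescent on two tips.

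For the impossibility half, my plan is to contrast two explicit functionals of the node depths whose Kingman values cannot simultaneously be reproduced by any randomized CPP once $n$ is large. I will write $M_n := \max_i H_i$ for the tree height and $\Delta_n$ for the difference between the largest and the second largest among $H_1, \ldots, H_{n-1}$. Under Kingman the sorted depths are the partial sums $\tau_k = T_n + T_{n-1} + \cdots + T_{n-k+1}$ with $T_k \sim \mathrm{Exp}(k(k-1)/2)$ independent, so $M_n = T_2 + \cdots + T_n$ and $\Delta_n = \tau_{n-1} - \tau_{n-2} = T_2 \sim \mathrm{Exp}(1)$, independent of $T_3, \ldots, T_n$. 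A direct calculation then gives, for every $n \geq 3$, $E[\Delta_n]_{\tt K} = 1$ and $\mathrm{Cov}(M_n, \Delta_n)_{\tt K} = \mathrm{Var}(T_2) = 1$.

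On the CPP side, I will use that conditional on $T_n$ the $H_i$ are iid on $(0, T_n]$, so $M_n \leq T_n$ almost surely. For any fixed finite $T$, both the maximum and the second maximum of $n-1$ iid samples on $(0, T]$ concentrate as $n \to \infty$ at the essential supremum of their common distribution, so that $E[\Delta_n \mid T_n = T] \to 0$ and $\mathrm{Cov}(M_n, \Delta_n \mid T_n = T) \to 0$ pointwise in $T$. If the family $\{T_n\}$ is $L^2$-tight, dominated convergence then forces $E[\Delta_n]_{\mathrm{CPP}} \to 0$ and $\mathrm{Cov}(M_n, \Delta_n)_{\mathrm{CPP}} \to 0$, contradicting the Kingman value $1$ for both.

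The delicate remaining case, in which $T_n$ may place positive mass on $+\infty$ or be heavy-tailed, will be the main obstacle, and I would handle it via Laplace transforms. Splitting $E[e^{-t\Delta_n}]_{\mathrm{CPP}}$ according to $\{T_n = +\infty\}$ versus $\{T_n < +\infty\}$, the finite-$T_n$ contribution tends pointwise to $1$, so matching Kingman's value $(1+t)^{-1}$ in the limit forces $P(T_n = +\infty) \to 1$. Along this subsequence one is effectively sampling iid from $A_n$ on $(0, +\infty)$; by R\'enyi's characterization of the exponential distribution, the order-statistic spacings of such an iid sample are mutually independent if and only if $A_n$ is exponential. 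Kingman's sorted depths do have independent exponential spacings, but with rates $1, 3, 6, \ldots, n(n-1)/2$, whereas an exponential $A_n$ produces the arithmetic progression $\lambda, 2\lambda, \ldots, (n-1)\lambda$, and no choice of $\lambda$ aligns these two multisets for any $n \geq 3$. The technical point I foresee as hardest is passing cleanly from these asymptotic reductions to exact failure at each sufficiently large finite $n$, for which I would single out one Laplace-transform invariant that rigidly separates the two laws for all large $n$.
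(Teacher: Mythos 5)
Your treatment of $Q_2$ coincides exactly with the paper's (take $T_2=+\infty$ and $A_2\sim\mathrm{Exp}(1)$), but your impossibility argument has a genuine gap at its first analytic step, and everything downstream depends on it. Your central claim --- that conditionally on $T_n=T$ the top spacing of $n-1$ iid draws tends to $0$, so that $E[\Delta_n\mid T_n=T]\to 0$ and the finite-$T_n$ part of $E[e^{-t\Delta_n}]$ tends pointwise to $1$ --- silently assumes the sampled distribution is \emph{fixed} as $n$ grows. In $Q_n$ the pair $(T_n,A_n)$ may vary arbitrarily with $n$, and then the claim is false: for instance, if $A_n$ conditioned below $T$ puts mass $1/n$ at $T$ and the rest on $(0,T/2]$, then with probability bounded away from zero exactly one of the $n-1$ draws lands at $T$, so the top gap exceeds $T/2$ uniformly in $n$. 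Since the $L^2$-tight case, the deduction $P(T_n=+\infty)\to 1$, and the reduction to an exact iid sample all rest on this concentration, the chain breaks at its first link. Moreover, the final step you flag yourself --- upgrading an asymptotic Laplace-transform mismatch to exact failure at each large finite $n$, and applying R\'enyi's spacing characterization to a sample that is only \emph{approximately} iid (since $P(T_n=+\infty)\to 1$ is not $P(T_n=+\infty)=1$) --- is left unresolved, and it sits precisely where the difficulty lies; as written this part is a plan, not a proof.

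For contrast, the paper avoids all quantitative estimates with a soft argument you may want to internalize: under $P_n^{\text{cpp}}$ the node depths are exchangeable (iid conditionally on $T_n$, hence a mixture of iid laws); if $P_n^{\text{cpp}}=P_n^{\tt K}$ held along an infinite subsequence, the subsequential weak limit $P_\infty^{\tt K}$ (the standard Kingman coalescent) would inherit exchangeability; but its node depths can a.s.\ be enumerated in strictly decreasing order converging to $0$, which no infinite exchangeable sequence can achieve, since by de Finetti such a sequence is conditionally iid and therefore a.s.\ exceeds any level of positive conditional probability infinitely often. Note also that your route, if it could be completed, would establish the stronger conjecture stated at the end of that section of the paper (answer `no' for every $n\ge 3$), which the authors explicitly leave open --- a further indication that the missing steps are not routine.
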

The answer to $Q_2$ is  `yes' since for $n=2$ we can set $T_2=+\infty$ and choose $A_2$ as the exponential random variable with parameter 1. Now assume that $P_n^{\text{cpp}}=P_n^{\tt K}$ for infinitely many $n$.
First, observe that under $P_n^{\text{cpp}}$ the node depths are exchangeable, that is, their law is invariant under any permutation. Second, recall that the laws $P_n^{\tt K}$ converge as $n\to\infty$ to the law $P_\infty^{\tt K}$ of what is known as the standard Kingman coalescent (i.e., starting at infinity). It is known that under $P_\infty^{\tt K}$, the node depths can be ranked in decreasing order in a single sequence converging to 0. Then since $P_n^{\text{cpp}}=P_n^{\tt K}$ for infinitely many $n$, there is a subsequence of $(P_n^{\text{cpp}})$ converging to $P_\infty^{\tt K}$. This forces the sequence of node depths under $P_\infty^{\tt K}$ to be exchangeable. This yields a contradiction, since by de Finetti's representation theorem of infinite sequences of exchangeable random variables, no such sequence can be ranked in decreasing order.

Our question remains unsolved but we conjecture that the answer to $Q_n$ is `no' for all $n\ge 3$.

\section*{\textsc{Discussion}}

In this paper, we characterized the forward-time macroevolutionary models which have URT reconstructed trees, and among those which have CPP reconstructed trees. We showed that reconstructed trees are CPP if speciation and extinction rates may only depend on time, and extinction rates may additionally depend on a non-heritable trait, in the case of asymmetric speciation. For all these models, reconstructed tree shapes follow the URT distribution. 

When the speciation or extinction rate depends additionally on the number of species, reconstructed trees are not CPP any longer, however, their ranked tree distribution again is the URT distribution.
For all remaining model classes, we have displayed examples where the ranked tree distribution is not the URT distribution.

We end the paper outlining how to use the results in empirical studies. Phylogenies with branch lengths are increasingly becoming available from empirical data, and such phylogenies have been fitted to speciation and extinction models in order to quantify speciation and extinction rates \citep{Stadler2011PNAScommentrary}. However, the inference methods had to make restrictive assumptions on the speciation and extinction rates. Here we provide a general framework for calculating the likelihood of a reconstructed phylogeny for any model under which the speciation and/or extinction rate may depend on time and the extinction rate may additionally  depend on an asymmetric non-heritable trait (model class 1). The likelihood calculation is based on a CPP representation and allows for model selection and the quantification of the model parameters, using e.g. maximum likelihood or Bayesian methods.
In order to do the statistical analysis of the empirical trees using our CPP representation,  the following numerical steps have to be performed:

\begin{enumerate}
\item Compute the density of species lifetimes $g$ by Equations \eqref{eqn:hts}, \eqref{eqn:ustx}, and \eqref{eqn:Feynman-Kac}, or very simply \eqref{eqn:FK age} in the case when the trait is the age, even if there also is time-dependence;
\item Compute $(F,F')$ by solving the integro-differential  equation \eqref{eqn:hidden convolution};
\item For adding bottlenecks or sampling, compute $F_p$ or $F_\varepsilon$ and their derivatives using \eqref{eqn:bottlenecks};
\item Compute the likelihood of  the tree using \eqref{eqn:likelihood1}, \eqref{eqn:likelihood1bis} or \eqref{eqn:likelihood2} (or in the case of incomplete sampling using the equations in Section ``Missing tips'').
\item Use likelihood in maximum likelihood  or Bayesian  parameter inference.
\end{enumerate}

If we want to simulate trees, we use the coalescent distribution $F$, $F_p$ or $F_\varepsilon$ and sample the speciation times from this one-dimensional distribution. Analytic solutions for $F$ are only known in the case of constant rates, piecewise constant rates, or a Gamma-distributed lifetime with shape 2 (see above); semi-analytic solutions are known for deterministic lifetimes. For other models, numerical approaches become necessary, and the main challenge of a future study \citep{AlexanderEtAl2013} will be to address point 2 with satisfying accuracy and efficiency.

\paragraph{Acknowledgments.} AL was financially supported by grant MANEGE `Mod\`eles Al\'eatoires en \'Ecologie, G\'en\'etique et \'Evolution' 09-BLAN-0215 of ANR (French national research agency). AL also thanks the {\em Center for Interdisciplinary Research in Biology} (Collège de France) for funding. TS thanks the Swiss National Science foundation for funding (SNF grant \#PZ00P3 136820). 
The authors thank Rampal S. Etienne for proposing the use of Lemma \ref{lem} (Appendix). They thank Mike Steel and Helen Alexander for their careful reading and relevant comments.

 \bibliographystyle{sysbio}

\newpage
\appendix

\section*{\textsc{Proofs of technical results}}

\subsection*{\textit{Proof of Proposition \eqref{prop:n-sampling} and Equations \eqref{eqn:n-sampling} and \eqref{eqn:n-sampling1}}}

Recall that $H_i'$ denotes the coalescence time between sampled tip $i$ and sampled tip $i+1$, $i=1,\ldots, n-1$. 
Observe that each sampling configuration $\vec{m}=(m_0,\ldots,m_n)$ such that $m_0+\cdots+m_n=m$ has the same probability, which can be interpreted as a single way of choosing $n$ among $n+m$ labelled balls, so that 
$$
P(\vec{m}) = \frac{n! \ m!}{(n+m)!} .
$$
By summing over all possible sampling configurations, the same argument as in the paragraph on Bernoulli sampling implies that for any $m\ge 0$ and any $x_1, \ldots, x_{n-1}\in [0,T]$
\begin{multline}
\label{eqn:big sum}
P(N_T=n+m, H_1' <x_1, \ldots, H_{n-1}'<x_{n-1}) = \frac{n! \ m!}{(n+m)!} \ P(H>T)\ \times\\ \times\ \sum_{\vec{m}:m_0+\cdots+m_n=m} P(H<x_1)^{m_1+1}\cdots P(H<x_{n-1})^{m_{n-1}+1} P(H<T)^{m_0+m_n}.
\end{multline}


It is easy to differentiate \eqref{eqn:big sum} to get 
\begin{multline*}
P(N_T=n+m, H_1' \in dx_1, \ldots, H_{n-1}'\in dx_{n-1})/dx_1\cdots dx_{n-1}\\ = \frac{n! \ m!}{(n+m)!} P(H>T) \sum_{\vec{m}:m_0+\cdots+m_n=m}  P(H<T)^{m_0+m_n}\prod_{i=1}^{n-1}(m_i+1)f(x_i)P(H<x_i)^{m_i}  .
\end{multline*}
If we sum directly over all pairs $(m_0,m_n)$, and if we write $x_n= T$, we get
\begin{multline*}
P(N_T=n+m, H_1' \in dx_1, \ldots, H_{n-1}'\in dx_{n-1})/dx_1\cdots dx_{n-1}\\ = \frac{n! \ m!}{(n+m)!} P(H>T) \sum_{\vec{m}:m_1+\cdots+m_n=m}  (m_n+1)P(H<x_n)^{m_n}\prod_{i=1}^{n-1}(m_i+1)f(x_i)P(H<x_i)^{m_i}  .
\end{multline*}
This proves \eqref{eqn:likelihoodmissing}. A similar line of reasoning shows that the same correction factor holds for the reconstructed tree with crown age $T$.

Let us prove Equation \eqref{eqn:n-sampling}. With the new notation, we can rewrite equation \eqref{eqn:big sum} as
$$
P(N_T=n+m, H_1' <x_1, \ldots, H_{n-1}'<x_{n-1}) = \frac{n! \ m!}{(n+m)!}\ (1-p_0)p_1\cdots p_{n-1}\ f_{n+1,m}(p_0,p_1\ldots, p_{n-1},p_0),
$$
where
$$
f_{n,m} (p_1, \ldots, p_n) :=\sum_{(m_1, \ldots, m_n)\in N_m^n}\prod_{i=1}^n p_i^{m_i},
$$
and $N_m^n$ is the set of $n$-tuples of integers $(m_1, \ldots, m_n)$  such that $\sum_{i=1}^n m_i = m$. Let us state a useful lemma, which is proved at the end of this section.
\begin{lem}
\label{lem}
For all integers $n\ge1$ and $m\ge 0$, for any pairwise distinct $p_1,\ldots, p_n\in[0,1]$, 
$$
f_{n,m} (p_1, \ldots, p_n) = \sum_{i=1}^n \frac{p_i^{m+n-1}}{\prod_{j=1,\ldots, n, j\not=i} (p_i-p_j)}.
$$
\end{lem}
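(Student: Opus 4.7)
The plan is to recognize $f_{n,m}(p_1,\ldots,p_n)$ as the complete homogeneous symmetric polynomial of degree $m$, and to extract the claimed closed form via partial fractions applied to its generating function. Concretely, one notes that
$$
\sum_{m\ge 0} f_{n,m}(p_1,\ldots,p_n)\, t^m \;=\; \prod_{i=1}^n \sum_{k\ge 0} (p_i t)^k \;=\; \prod_{i=1}^n \frac{1}{1-p_i t},
$$
because expanding the product on the right and grouping monomials by their total degree reproduces the sum $\sum_{(m_1,\ldots,m_n):\,\sum m_i=m}\prod_i p_i^{m_i}$ as the coefficient of $t^m$.

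Next, assume temporarily that every $p_i$ is nonzero (recall the $p_i$ are pairwise distinct by hypothesis). Then $\prod_{i=1}^n (1-p_i t)^{-1}$ is a rational function with $n$ simple poles at $t=1/p_i$, so it admits a partial fraction decomposition
$$
\prod_{i=1}^n \frac{1}{1-p_i t} \;=\; \sum_{i=1}^n \frac{A_i}{1-p_i t}.
$$
Computing the residue at $t = 1/p_i$ gives $A_i = \prod_{j\ne i}(1-p_j/p_i)^{-1} = p_i^{n-1}/\prod_{j\ne i}(p_i-p_j)$. Expanding each summand on the right as a geometric series in $t$ and reading off the coefficient of $t^m$ yields precisely
$$
f_{n,m}(p_1,\ldots,p_n) \;=\; \sum_{i=1}^n A_i\, p_i^m \;=\; \sum_{i=1}^n \frac{p_i^{m+n-1}}{\prod_{j\ne i}(p_i-p_j)},
$$
which is the claim.

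Finally, to cover the boundary case in which one of the $p_i$ equals $0$ (at most one, by distinctness), multiply both sides of the stated identity by the nonvanishing factor $\prod_{i<j}(p_i-p_j)$: the result is a polynomial identity in $(p_1,\ldots,p_n)\in\RR^n$ which already holds on the dense open subset where all $p_i$ are nonzero, hence holds everywhere. I do not expect a genuine obstacle here: the identity is a standard formula for complete homogeneous symmetric polynomials, and the only care needed is the bookkeeping of the residue computation and the extension-by-continuity argument at $p_j=0$.
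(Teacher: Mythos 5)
Your proof is correct, but it takes a genuinely different route from the paper's. You identify $f_{n,m}$ as the complete homogeneous symmetric polynomial, pass to its generating function $\prod_{i=1}^n(1-p_it)^{-1}$, and extract the closed form by partial fractions; the residue computation $A_i=p_i^{n-1}/\prod_{j\neq i}(p_i-p_j)$ is right (the decomposition has no polynomial part since the numerator degree is less than the denominator degree), and your clearing of the case $p_i=0$ by multiplying through by $\prod_{i<j}(p_i-p_j)$ to get a polynomial identity valid on a dense set is rigorous. The paper instead proceeds by induction on $n$: the induction step sums the geometric series $\sum_{m_{n+1}=0}^{m}p_{n+1}^{m_{n+1}}p_i^{m-m_{n+1}}=\bigl(p_i^{m+1}-p_{n+1}^{m+1}\bigr)/(p_i-p_{n+1})$ and then absorbs the leftover terms using the Lagrange-interpolation identity $\sum_{i=1}^n p_i^{n-1}\prod_{j\neq i}\frac{x-p_j}{p_i-p_j}=x^{n-1}$ (a degree-$(n-1)$ polynomial is determined by its values at $n$ points). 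Your approach is shorter and more conceptual: it proves the formula for all $m$ at once and makes the coefficients transparent as residues, at the small cost of the extra continuity argument at $p_i=0$ (which the paper's purely algebraic induction never needs, since it only ever divides by differences $p_i-p_j$ of distinct values). The paper's induction, in exchange, is self-contained and mirrors the telescoping-over-$m_{n+1}$ structure that the surrounding $n$-sampling computation also uses.
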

In view of this lemma, we get
$$
P(N_T=n+m, H_1' <x_1, \ldots, H_{n-1}'<x_{n-1}) = \frac{n! \ m!}{(n+m)!}\ (1-p_0)p_1\cdots p_{n-1}\ \lim_{p_n\to p_0} \sum_{i=0}^{n} \frac{p_i^{m+n}}{\prod_{j=0,\ldots, n, j\not=i} (p_i-p_j)}.
$$
Now the following limit holds for any $k=1,\ldots, n$
\begin{multline*}
\lim_{p_i\to p_0, \forall i = n-k+1,\ldots,n}\sum_{i=0}^{n} \frac{p_i^{m+n}}{\prod_{j=0,\ldots, n, j\not=i} (p_i-p_j)}\\
= \sum_{i=1}^{n-k} \frac{p_i^{m+n}}{(p_i-p_0)^{k+1}\prod_{j=1,\ldots, n-k, j\not=i} (p_i-p_j)}
+ \frac{1}{k!}\ \left(\frac{b}{a_{n-k}}\right)^{(k)}(p_0),
\end{multline*}
where 
$$
b(x) := x^{m+n}\quad\mbox{ and }\quad a_k(x) := \prod_{j=1}^{k}(x-p_j).
$$
Applying this to $k=1$, we get \eqref{eqn:n-sampling}.
Actually, it is also possible to apply this to $k=n$ to recover the law of $N$, and to $k=n-1$ to get the law of $H_1'$, namely
\begin{eqnarray*}
P(N_T=n+m, H_1'<x_1) &=& \frac{n! \ m!}{(n+m)!}\ (1-p_0)p_0^{n-2}p_1(p_1-p_0)^{-n}
\\
 &\times&\left(
b(p_1) - b(p_0) - (p_1-p_0) b'(p_0) - \cdots- \frac{(p_1-p_0)^{n-1}}{(n-1)!}\ b^{(n-1)}(p_0)
\right)\\
	&=& n(1-p_0)p_0^{n-2}p_1(p_0-p_1)^{-n} \int_{p_1}^{p_0} y^m (y-p_1)^{n-1}\, dy,
\end{eqnarray*}
which is Equation \eqref{eqn:n-sampling1}.
 
\begin{proof}[Proof of Lemma \ref{lem}]
We will need the following preliminary result. For any $n\ge 1$ and any pairwise distinct real numbers $p_1,\ldots, p_n$, the polynomial $R$ defined by
$$
R(x):= \sum_{i=1}^n p_i^{n-1} \prod_{j=1,\ldots, n, j\not=i} \frac{x-p_j}{p_i-p_j}
$$
is actually also given by $R(x) = x^{n-1}$. This is merely due to the fact that $R$ has degree $n-1$, and hence is characterized by the values it takes at $n$ distinct points (here, the points $p_1,\ldots, p_n$). \\

It is easy to check that the formula holds for $n=1$ and any $m\ge 0$ since $f_{1,m}(s) = s^m$. Let us prove the formula by induction on $n$. Let $n\ge 1$ and assume the formula holds for this integer $n$ and any $m\ge 0$. 
Thanks to this assumption, for any pairwise distinct real numbers $p_1,\ldots, p_{n+1}$,
\begin{eqnarray*}
f_{n+1,m} (p_1,\ldots, p_{n+1}) &=& \sum_{m_{n+1}=0}^m p_{n+1}^{m+1} \sum_{(m_1,\ldots, m_n)\in N_{m-m_{n+1}}^{n}}\prod_{i=1}^n {p_i}^{m_i}\\
	&=& \sum_{m_{n+1}=0}^m p_{n+1}^{m_{n+1}} f_{n,m-m_{n+1}}(p_1, \ldots, p_n)\\
	&=& \sum_{m_{n+1}=0}^m p_{n+1}^{m_{n+1}} \ \sum_{i=1}^n \frac{p_i^{m-m_{n+1}+n-1}}{\prod_{j=1,\ldots, n, j\not=i} (p_i-p_j)} \\
		&=& \sum_{i=1}^n \frac{p_i^{n-1}}{\prod_{j=1,\ldots, n, j\not=i} (p_i-p_j)} \ \frac{p_i^{m+1}-p_{n+1}^{m+1}}{p_i- p_{n+1}}\\
			&=& \sum_{i=1}^n \frac{p_i^{m+n}}{\prod_{j=1,\ldots, n+1, j\not=i} (p_i-p_j)} +  p_{n+1}^{m+1} \frac{R(p_{n+1})}{\prod_{j=1}^n (p_{n+1}-p_j)} ,
\end{eqnarray*}
where $R$ is the polynomial introduced in the beginning of this proof. Since we have shown that $R(x) = x^{n-1}$, the result is proved by the induction principle. 

 \end{proof}

\subsection*{\textit{Proof of Theorem \ref{ThmMain}}}

Let $n\ge 1$ be an integer, $h_0=T$ and $h_1,\ldots, h_{n-1}$ be elements of $(0,T)$. Assume $N_T\ge n$, and condition on $H_i = h_i$ for $i=0,\ldots, n-1$. We are going to prove that the conditional law of $H_n$ is given by 
\begin{equation}
\label{to prove}
P(H_n>t) = \exp\left(- \int_{\timestop-t}^ \timestop \lambda(s)\, (1-q(s))\,ds\right)\qquad t\in[0,T],
\end{equation}
which will show that $H_n$ is independent of $H_1, \ldots, H_{n-1}$ and has $F$ as inverse tail distribution. This result yields the theorem by induction. Note that conditonal on $N_T\ge n$, $N_T$ exactly equals $n$ iff $H_n >T$.

Label by $0,1,\ldots, n-1$ the extant species  at time $\timestop$ in the order induced by the orientation of the tree. In particular, $h_i$ is the coalescence time between species $i-1$ and $i$ ($0\le i\le n-1$).

We denote by $k$ the number of generations separating species $n-1$ from the progenitor species. We let $x_k$ denote the time when species $n-1$ was born, $x_{k-1}<x_k$ the time when her mother was born, and so on, until 
$x_0=0$ the birth time of the progenitor species. By the orientation of the tree, there are (conditionally) deterministic indices $0=i_0  <\cdots <i_k=n-1$, such that $x_j= T-h_{i_j}$ (and $h_v< h_{i_j}$ for all $v\in\{i_{j-1}+1,\ldots, i_j -1\}$), so that conditional on $H_i = h_i$ for $i=0,\ldots, n-1$, the times $x_0,\ldots, x_k$ are deterministic. 

By the orientation of the tree again, apart from the species already labelled, species extant at $T$ descend from speciations occurring during one of the time intervals $I_j:=[x_j, x_{j+1})$, where $x_{k+1}:=T$ for convenience. On each of these time intervals, speciations occur at rate $\lambda(t)$, and so successful speciations, i.e., speciations with extant descendance at time $T$, occur at rate $\lambda(t)\, (1-q(t))$. But conditional on the $(x_j)$, all the ancestors of species $n-1$ (including her) independently speciate on their corresponding interval $I_j$. Then if $A$ denotes any subset of $[0,T)$, the number $N(A)$ of successful speciations occurring during $A$ is the sum 
$$
N(A) = \sum_{j=0}^k N(A\cap I_j),
$$
where the random numbers $N(A\cap I_j)$ are independent. Now from Section ``Precisions about rates'', $N(A\cap I_j)$ is a Poisson random variable with parameter $\int_{A\cap I_j} \lambda(t)\, (1-q(t))\, dt$. As a consequence, $N(A)$ is a Poisson random variable with parameter
$$
\sum_{j=0}^k \int_{A\cap I_j} \lambda(t)\, (1-q(t))\, dt = \int_{A} \lambda(t)\, (1-q(t))\, dt.
$$
The proof finishes noticing that $H_n>t$ iff $N([T-t, T)) =0$, which occurs with the probability displayed in \eqref{to prove}.

\subsection*{\textit{Proof of Proposition \ref{prop:NEW}}}

Invoking arguments from Section ``Precisions about rates'', it is easy to see that 
$$
q(t) = \int_t^T ds\, g(t,s) \,e^{-\int_t^s du \lambda(u)\,(1-q(u))}.
$$
Recalling that 
$$
F(t) = \frac{1}{P(H>t)} = \exp\left( \int_{T-t}^T \lambda(s)\, (1-q(s))\,ds\right),
$$
we get
$$
q(t)= \int_t^T ds\, g(t,s) \, \frac{F(T-s)}{F(T-t)},
$$
or equivalently
$$
q(T-t) = \int_0^t ds\, g(T-t,T-s) \, \frac{F(s)}{F(t)}.
$$
Now check that 
$$
F'(t) = \lambda(T-t)\, (1-q(T-t))\, F(t).
$$
Equation \eqref{eqn:hidden convolution} is a consequence of the last two displayed equations.

\subsection*{\textit{Proof of Proposition \ref{prop:markov}}}

By an integration by parts on Equation \eqref{eqn:hidden convolution} in Proposition \ref{prop:NEW},  we get
$$
F'(t) = \tilde{\lambda} (t) \left[e^{-\int_0^t \tilde{\mu}(u)\,du} + \int_0^t ds\, F'(s)\,e^{-\int_s^t \tilde{\mu}(u)\,du}\right],
$$
where
$$
\tilde{\lambda}(t) :=\lambda (T-t) \quad \mbox{ and }\quad \tilde{\mu}(t):= \mu (T-t).
$$
Setting 
$$
G(t) := F'(t)\,e^{\int_0^t \tilde{\mu}(u)\,du},
$$
we get
$$
G(t) = \tilde{\lambda} (t) \left(1 + \int_0^t ds\, G(s)\right),
$$
which yields 
$$
G(t)= \tilde{\lambda} (t) \,e^{\int_0^t \tilde{\lambda}(u)\,du}.
$$
Since $F(0)=1$, we have 
$$
F(t) = 1+ \int_0^t F'(s) \, ds = 1+ \int_0^t G(s)\,e^{-\int_0^s \tilde{\mu}(u)\,du}
	= 1+ \int_0^t \tilde{\lambda} (s) \,e^{\int_0^s (\tilde{\lambda}- \tilde{\mu})(u)\,du},
$$
which proves the result after changing variables.

\subsection*{\textit{Proof of Proposition \ref{PropnMh}}}

Since $g(t,s)= g(t-s)$, \eqref{eqn:hidden convolution} becomes \eqref{eqn:convolution}, that is
$$
F'(t) = \lambda\,\left( F(t) - \int_0^t F(s) g(t-s)\, ds\right)= \lambda\,\left( F(t) - F\star g(t)\right),
$$
where $\star$ denotes the convolution product. Denoting by $\mathscr{L}$ the Laplace transform and using the fact that $\mathscr{L}(F\star g)=\mathscr{L}(F)\mathscr{L}(g)$, we get
$$
\mathscr{L}(F') = \lambda \mathscr{L}(F)(1- \mathscr{L}(g)).   
$$
An integration by parts shows that 
$$
\mathscr{L}(F')(x) = \int_0^\infty F'(t) \,e^{-tx}\, dt = \left[F(t)\,e^{-tx}\right]_0^\infty + x\int_0^\infty F(t) \,e^{-tx}\, dt,
$$
so that as soon as $x$ is larger than the exponential growth parameter of $F$ (which is the largest root of the convex function $\psi$), we have $\mathscr{L}(F')(x) = -1 + x \mathscr{L}(F)(x)$. As a consequence,
$$
(x-\lambda + \lambda \mathscr{L}(g)(x))\mathscr{L}(F)(x) = 1, 
$$
which is the announced equality.

\subsection*{\textit{Proof of Proposition \ref{prop:bottlenecks}}}

We first characterize the effect of one bottleneck on a coalescent point process and then generalize to $k$ bottlenecks by iterative thinnings.

Assume $k=1$ and $s_1\in(0,\timestop)$. Recall that a coalescent point process is defined thanks to a sequence of independent, identically distributed random variables $(H_i)$. We will see that the tree obtained after thinning is still a coalescent point process, defined from independent random variables, say $(B_i)$. Let $(e_i)$ be the i.i.d. Bernoulli random variables defined by $e_i=1$ if lineage $i$ survives the bottleneck (this has a meaning only if $H_i \ge s_1$; it happens with probability $\varepsilon_1$). By the orientation of the tree, a tip terminating a branch with  depth smaller than $s_1$ is kept alive iff the last branch with depth larger than $s_1$ is not thinned at time $\timestop-s_1$. As a consequence, if $H_1<s_1$, then the first lineage is alive and its coalescence time with the left-hand ancestral lineage is $B=H_1$. Otherwise, define $1=J_1< J_2<\cdots$ the indices of consecutive branches with depths larger than $s_1$. Then the first lineage kept alive after thinning is 
the least $J_m$ such that $e_{J_m}=1$, and its coalescence time with the ancestral lineage is $B=\max(H_{J_1},\ldots, H_{J_m})$. By the independence property of coalescent point processes and by the independence of the Bernoulli random variables $(e_i)$, the new genealogy is obtained by a sequence of independent random variables $(B_i)$ all distributed as $B$. 

Let us specify a little bit the law of $B$. First, with probability $P(H< s_1)$, $P(B\in \cdot) =P(H\in \cdot \mid H< s_1)$. Second, with probability $P(H\ge s_1)$
$$
B\stackrel{(d)}{=}\max\{A_1,\ldots, A_K\},
$$
where the $A_i$'s are i.i.d. distributed as $H$ conditional on $H\ge s_1$ and $K$ is an independent (modified) geometric r.v., that is, $P(K=j)=\varepsilon_1(1-\varepsilon_1)^{j-1}$. Similarly as in Proposition \ref{prop:sampling}, we get for any $s\ge s_1$
$$
\frac{1}{P(B\ge s)}=\frac{1-\varepsilon_1}{P(H \ge s_1)} + \frac{\varepsilon_1}{P(H\ge s)}\qquad s\ge s_1. 
$$
Then if $F_\varepsilon$ denotes the inverse tail distribution of $B$, i.e., $F_\varepsilon(s):= 1/P(B\ge s)$, we have
$$
F_\varepsilon (s) = 
\begin{cases}
F(s) &\text{if } 0\le s\le s_1\\
(1-\varepsilon_1)F(s_1)+\varepsilon_1 F(s)&\text{if }s_1\le s \le t, 
\end{cases}
$$
where $F$ is the inverse tail distribution of $H$.
Iterating this procedure yields the result in Proposition \ref{prop:bottlenecks}.

\section*{\textsc{Generator and Feynman-Kac formulae}}

We first define what is the generator $L_t$ of the trait dynamics. If the trait takes values in a finite (or even countable) state-space $E$, then for any function $\varphi:E\to \RR$, the function $L_t\varphi:E\to \RR$ is defined by 
$$
L_t \varphi(x) = \sum_{y\in E, y\not=x} \rho_t(x,y)\ (\varphi(y)-\varphi(x)), 
$$
where $\rho_t(x,y)$ is the jump rate from trait value $x$ to trait value $y$, at time $t$. If $X$ is a diffusion process on the real line, and if $\varphi$ is twice differentiable, then 
$$
L_t\varphi(x) = c(t,x)\ \varphi'(x) + \frac{1}{2} \sigma^2(t,x)\ \varphi''(x),
$$
where $c(t,x)$ is the infinitesimal mean and $\sigma^2(t,x)$ is the infinitesimal variance. This includes the case when $X$ is driven by a differential equation ($\sigma\equiv 0$), which in turn includes the case when $X$ is the age ($c\equiv 1$, since then $dX_t = dt$).

We now explain why we did not follow the path of characterizing $q(t)$ in terms of the model parameters. Let $q(t,x)$ denote the probability that a species carrying trait value $x$ at time $t$ has no descendance by time $\timestop$. In particular,
$$
q(t) = \int_{\RR} \nu_t(dx)\, q(t,x),
$$
and in the case when the trait is the age, $q(t) = q(t,0)$ ($\nu_t$ is then the Dirac measure at 0).

If the process $X$ is Markovian, then it is possible to prove that 
\begin{equation}
\label{eqn:generator}
\frac{\partial q}{\partial t} (t,x) + L_t q (t,x) 
= -\lambda(t)\ q(t,x)\ q(t) - \mu(t,x) +(\lambda(t) + \mu(t,x))\ q(t,x),
\end{equation}
with the terminal condition $q(T,x)=0$. Solving \eqref{eqn:generator} will yield an expression of $q(t,x)$ involving $q(t)$ as an argument. Integrating $q(t,x)$ against $\nu_t(dx)$ will yield $q(t)$, which will then appear on both sides of the equation. Since it is not always clear in general how to identify $q(t)$ from this equation, we have provided an alternative solution in Proposition \ref{prop:NEW}.

\section*{\textsc{Beyond rates}}

We end this appendix by highlighting that all the results in this paper would still hold even if speciation and extinction rates were not proper rates. For example, in Section ``Deterministic lifetimes'', we have applied Proposition \ref{PropnMh} to the case when species lifetimes can be deterministic, fixed to some value, say $b$. This amounted to replacing the lifetime distribution $g(a)\,da$ by the Dirac measure $\delta_b(da)$. 

Here, the number $N(a,b)$ of speciations from a same mother species between times $a$ and $b$ is a Poisson random variable with parameter $\int_a^b \lambda(t)\, dt$. It is usual, as we do here, to say that speciations occur at rate $\lambda(t)$ at time $t$, but we could also say that speciation dates are the `atoms' of a `Poisson point measure' with intensity measure $\lambda(t)\, dt$. This definition can be extended to considering a Poisson point measure with intensity measure $\Lambda$, where $\Lambda$ is any finite measure on the real numbers. This means that $N(a,b)$ now is a Poisson random variable with parameter $\Lambda ([a,b])=\int_a^b \Lambda(dt)$. For example, if $\Lambda=\sum_{i=1}^n \alpha_i \delta_{s_i}$, with $\alpha_i>0$ for all $i$ and $0<s_1<\cdots <s_n<\timestop$, then $N(a,b)$ is obtained by adding independent numbers of speciation events at each time $s_i$, each following the Poisson distribution with parameter $\alpha_i$. All statements hold, including Theorem \ref{ThmMain}, if we replace everywhere  $\lambda(t)\, dt$ with a general $\Lambda(dt)$ and $\mu(t) \,dt$ by a general $M(dt)$. 

In particular, in the case when rates only depend on time, Proposition \ref{prop:bottlenecks} (CPP with mass extinctions) can be seen as a generalized version of Proposition \ref{prop:markov} where we add to the extinction intensity measure $M_0(dt) =\mu(t)\, dt$ a point measure $M_1$ with atoms at the time points where mass extinctions occur. More specifically, with 
$$
M_1=\sum_{i=1}^k \ln(1/\varepsilon_i) \,\delta_{T-s_i},
$$
the reconstructed tree of the diversification model with extinction intensity measure $M_0+M_1$ is the reconstructed tree of the diversification model with extinction intensity measure $M_0$ (i.e., with extinction rate $\mu(t)$ at time $t$) to which mass extinctions are added at the times $T-s_i$ with survival probability $\varepsilon_i$. Indeed, there will be an additional extinction at time $T-s_i$ if the number of atoms at $T-s_i$ of the Poisson point measure of extinctions is nonzero, which happens with probability $1-\exp(-\ln(1/\varepsilon_i) ) = 1-\varepsilon_i$. Now recall from Proposition \ref{prop:markov} that in the Markovian case,
$$
F(t) =1+\int_{T-t}^T ds\,\lambda(s)\,e^{\int_s^T dr\,(\lambda(r)-\mu(r))}.
$$
It can be seen that if we replace in the previous equation the measure $\mu(r)\, dr$ by $\mu(r)\, dr + M_1(dr)$, we indeed recover the function $F_\varepsilon$ displayed in Proposition \ref{prop:bottlenecks}.

\end{document}